\crefname{equation}{}{} 
\crefname{section}{Sec.}{Sec.}
\newcommand{\envs}{\mathbb{W}}
\newcommand{\env}{\mathbf{w}}
\newcommand{\reals}{\mathbb{R}}
\newcommand{\tree}{\mathcal{T}}
\newcommand{\argmin}{\operatornamewithlimits{argmin}}
\theoremstyle{plain}
\newtheorem{remark}{Remark}
\newtheorem{definition}{Definition}
\newtheorem{theorem}{Theorem}
\newtheorem{example}{Example}
\newtheorem{lemma}{Lemma}
\newtheorem{corollary}{Corollary}
\title{\LARGE \bf
Verifying Controllers Against Adversarial Examples with Bayesian Optimization
}
\author{Shromona Ghosh$^{1}$, Felix Berkenkamp$^{2}$, Gireeja Ranade$^{3}$, Shaz Qadeer$^{3}$, Ashish Kapoor$^{3}$
\thanks{The primary part of the research for this paper was done  at Microsoft Research, Redmond while Shromona Ghosh and Felix Berkenkamp were interns.}
\thanks{$^{1}$Electrical Engineering and Computer Science Department,
        University of California, Berkeley,
        {\tt\small shromona.ghosh@berkeley.edu}}%
\thanks{$^{2}$Department of Computer Science, ETH Z\"urich, Switzerland
        {\tt\small befelix@inf.ethz.ch}}%
 \thanks{$^{3}$ Microsoft Research, Redmond
 	{\tt\small \{giranade,} {\tt\small qadeer, akapoor\}@microsoft.com }}
}
\newcommand{\mytitle}{\textbf{Accepted final version.}
To appear in \textit{2018 IEEE International Conference on Robotics and Automation}.\\
\copyright 2018 IEEE. Personal use of this material is permitted. Permission from IEEE must be obtained for all other uses, in any current or future media, including reprinting/republishing this material for advertising or promotional purposes, creating new collective works, for resale or redistribution to servers or lists, or reuse of any copyrighted component of this work in other works.}
\begin{document}
\maketitle
\thispagestyle{fancy}
\pagestyle{empty}



\begin{abstract}
Recent successes in reinforcement learning have lead to the development of complex controllers for real-world robots. As these robots are deployed in safety-critical applications and interact with humans, it becomes critical to ensure safety in order to avoid causing harm. A first step in this direction is to test the controllers in simulation. To be able to do this, we need to capture what we mean by safety and then efficiently search the space of all behaviors to see if they are safe. In this paper, we present an active-testing framework based on Bayesian Optimization. We specify safety constraints using logic and exploit structure in the problem in order to test the system for adversarial counter examples that violate the safety specifications. These specifications are defined as complex boolean combinations of smooth functions on the trajectories and, unlike reward functions in reinforcement learning, are expressive and impose hard constraints on the system. In our framework, we exploit regularity assumptions on individual functions in form of a Gaussian Process (GP) prior. We combine these into a coherent optimization framework using problem structure. The resulting algorithm is able to provably verify complex safety specifications or alternatively find counter examples. Experimental results show that the proposed method is able to find adversarial examples quickly.
\end{abstract}


\section{Introduction}

In recent years, research in control theory and robotics has focused on developing efficient controllers for robots that operate in the real world. Controller synthesis techniques such as reinforcement learning, optimal control, and model predictive control have been used to synthesize complex policies.
However, if there is a large amount of uncertainty about the real world environment that the system interacts with, the robustness of the synthesized controller becomes critical. This is particularly true in \textit{safety-critical} systems, where the actions of an autonomous agent may affect human lives. This motivates us to provably verify the properties of controllers in simulation before deployment in the real world.

In this paper, we present an active machine learning framework that is able to verify black-box systems against, or alternatively find, adversarial counter examples to a given set of safety specifications. We test the controller safety under uncertainty that arises from stochastic environments and errors in modeling. In essence, we actively search for adversarial environments under which the controller could have to operate that lead failure modes in simulation.

\begin{figure}
\vspace{2mm} 
\includegraphics{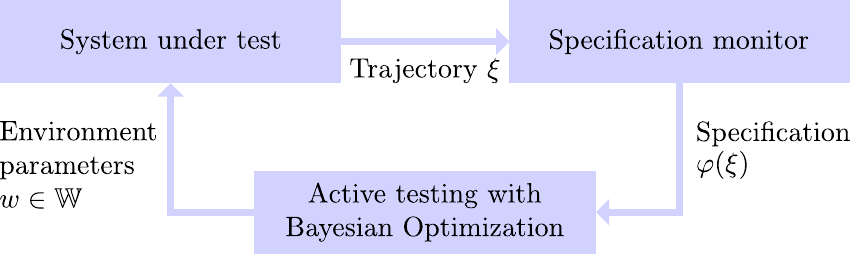}
\caption{Framework for active testing. We query the simulation of the system with environment parameters~$w$ to obtain trajectories~$\xi$. We test these for safety violations in a specification monitor. The Bayesian optimization framework actively queries new parameters~$w$ that are promising candidates to find counter examples that violates the safety specification.}
\label{Fig:framework}
\end{figure}

Historically, designing robust controllers has been considered in control theory \cite{Sastry:1989:ACS:63437,Stengel:1986:SOC:26887}. A common issue with these  techniques is that, although they consider uncertainty, they rely on simple linear models of the underlying system. This means that resulting controllers are often either overly conservative or violate safety constraints if they fail to capture nonlinear effects.

For nonlinear models with complex dynamics, reinforcement learning has been successful for synthesizing high fidelity controllers. Recently, algorithms based on reinforcement learning that can handle uncertainty have been proposed~\cite{KahnVPAL17,niv2002evolution,poupart2008model}, where the performance is measured in expectation.
A fundamental issue with learned controllers is that it is difficult to provide formal guarantees for safety in the presence of uncertainty. For example, a controller for an autonomous vehicle must consider human driver behaviors, pedestrian behaviors, traffic lights, uncertainty due to sensors, etc. Without formally verifying that these controllers are indeed safe, deploying them on the road could lead to loss of property or human lives.

\textit{Formal safety certificates}, i.e., mathematical proofs for safety, have been considered in the formal methods community, where  safety requirements are referred to as a \textit{specification}. There, the goal is to verify that the behaviors of a particular model satisfies a specification~(\cite{Clarke:2000:MC:332656, mitchell2000level}). 
Synthesizing controllers which satisfy a high level temporal specification have been studied in the context of motion planning~\cite{bhatia2010sampling} and for cyber-physical systems~\cite{raman2014model}. 
However, these techniques rely on simple model dynamics. For nonlinear systems, reachability algorithms based on level set methods have been used to approximate backward reachable sets for safety verification~\cite{mitchell2004computing,Berkenkamp2017Safe}. However, these methods suffer from two major drawbacks: (1)~the curse of dimensionality of the state space, which limits them to low-dimensional systems; and (2)~\textit{a priori} knowledge of the system dynamics.

A dual, and often simpler, problem is \textit{falsification}, which
tests the system within a set of environment conditions for adversarial examples.
Adversarial examples have recently been considered for neural networks~\cite{goodfellow2014explaining,papernot2016practical,DBLP:journals/corr/BehzadanM17, carlini2017towards}, where the input is typically perturbed locally in order to find counterexamples. In \cite{DBLP:journals/corr/HuangPGDA17}, the authors compute adversarial perturbations for a trained neural network policy for a subset of white box and black-box systems. However, these local perturbations are often not meaningful for dynamic systems.
Recently,~\cite{dreossi2017compositional, Pei2017} have focused on testing of closed-loop safety critical systems with neural networks by finding ``meaningful" perturbations.

Testing black-box systems in simulators is a well studied problem in the formal methods community~\cite{donze2010breach,c2e2,annpureddy2011s}. The heart of research in black-box testing focuses on developing smarter search techniques which efficiently samples the uncertainty space. Indeed, in recent years, several sequential search algorithms based on heuristics such as Simulated Annealing~\cite{annpureddy2011s}, Tabu search~\cite{deshmukh2015stochastic}, and CMA-ES~\cite{hansen2016cma} have been suggested. Although these algorithms sample the uncertainty space efficiently, they do not utilize any of the information gathered during previous simulations.

One active method that has been used recently for testing black-box systems is Bayesian Optimization (BO)~\cite{mockus2012bayesian}, an optimization method that aims to find the global optimum of an \textit{a priori} unknown function based on noisy evaluations. Typically, BO algorithms are based on Gaussian Process (GP~\cite{3569}) models of the underlying function and certain algorithms provably converge close to the global optimum~\cite{srinivas2009gaussian}. It has been used in robotics to, for example, safely optimize controller parameters of a quadrotor~\cite{Berkenkamp2016SafeOpt}. In the testing setting, BO has been used to actively find counter examples by treating the search problem as a minimization problem in~\cite{DHJMP17} over adversarial control signals. However, the authors do not consider the structure of the problem and thereby violate the smoothness assumptions made by the GP model. As a result, their methods are slow to converge or may fail to find counterexamples.

In this paper, we provide a formal framework that uses BO to actively test and verify closed-loop black-box systems in simulation. We model the relation between environments and safety 
specification using GPs and use BO to predict the environment scenarios most likely to cause failures in our controllers. Unlike previous approaches, we exploit structure in the problem in order to provide a formal way to reason across multiple safety constraints in order to find counterexample. Hence, our approach is able to find counterexamples more quickly than previous approaches. Our main contributions are:
\begin{itemize}
\item An active learning framework for testing and verifying robotic controllers in simulation. Our framework can find adversarial examples for a synthesized controller independent of its structure or how it was synthesized.
\item A common GP framework to model logical safety specifications along with theoretical analysis on when a system is verified.
\end{itemize}


\section{Problem Statement}
\label{sec:problem_statement}

We address the problem of testing complex black-box closed-loop robotic systems in simulation. We assume that we have access to a simulation of the robot that includes the control strategy, i.e., the closed-loop system. The simulator is parameterized by a set of parameters~${\env \in \envs}$, which model all sources of uncertainty. For example, they can represent environment effects such as weather, non-deterministic components such as other agents interacting with the simulator, or uncertain parameters of the physical system, e.g., friction.

The goal is to test whether the system remains safe for all possible sources of uncertainty in~$\envs$. We specify these safety constraints on finite-length trajectories of the system that can be obtained by simulating the robot for a given set of environment parameters~${\env \in \envs}$. Safety constraints on these trajectories are specified using logic. We explain this in detail in~\cref{Sec:Req}, but the result is a specification~$\varphi$ that can, in general, be written as a requirement~$\varphi(\env) > 0, \forall \env \in \envs$. For example,~$\varphi$ can encode state or input constraints that have to be satisfied over time.

We want to test whether there exists an adversarial example~${\env \in \envs}$ for which the specification is violated, i.e.,~${ \varphi(\env) < 0 }$. Typically, adversarial examples are found by randomly sampling the environment and simulating the behaviors. However, this approach does not provide any guarantees and does not allow us to conclude that no adversarial example exist if none are found in our samples. Moreover, since high-fidelity simulations can often be very expensive, we want to minimize the number of simulations that we have to carry out in order to find a counterexample.

We propose an active learning framework for testing, where we utilize the results from previous simulation runs to make more informed decisions about which environment to simulate next. In particular, we pose the search problem for a counterexample as an optimization problem,
\begin{equation}
\label{Eqn:BoolFalse}
\argmin_{\env \in \envs}\, \varphi(\env) ,
\end{equation}
where we want to minimize the number of queries~$\env$ until a counterexample is found or we can verify that no counterexample exists. The main challenge is that the functional dependence~$\varphi(\cdot)$ between parameters in~$\envs$ and the specification is unknown \textit{a priori}, since we treat the simulator as a black-box. Solving this problem is difficult in general, but we can exploit regularity properties of~$\varphi(\env)$. In particular, in the following we use GP to model the specification and use the model to pick parameters that are likely to be counterexamples.

\section{Background}
\label{sec:background}

In this section, we introduce an overview of formal safety specifications and Gaussian processes, which we use in~\cref{sec:main_section} to verify the closed-loop black-box system.

\subsection{Safety Specification}
\label{Sec:Req}

In the formal methods community, complex safety requirements are expressed using automatons~\cite{alur1994theory} and temporal logic~\cite{Pnueli:1977:TLP:1382431.1382534,maler2004monitoring}. These allow us to specify complex constraints, which can also have temporal dependence.

\begin{example}
\label{Ex:quadcopter_spec}
A safety constraint for a quadcopter might be that the quadcopter cannot fly at an altitude~$h$ greater than \unit[3]{m} when the battery level~$b$ is below 30\%."

In logic, we can express this as ``$b < 0.3$ \textbf{implies}($\rightarrow$) $h < 3$'', which in words says if the battery level is less than $30\%$ the quadcopter is flying at a height less than \unit[3]{m}.
\end{example}




Importantly, these kind specifications make no assumptions about the underlying system themselves. They just state requirements that must hold for all simulations in~$\envs$. Formally, a logic specification is a function that tests properties of a particular trajectory. However, we will continue to write~$\varphi(\env)$ to denote the specification that tests trajectories generated by the simulator with parameters $\env$.

A specification~$\varphi$ consists of multiple individual constraints, called predicates, which form the basic building blocks of the logic. These predicates can be combined using a syntax or grammar of logical operations:
\begin{equation}
\varphi := \mu \,|\, \neg \mu \,|\, \varphi \wedge \psi \,|\, \varphi \vee \psi.
\end{equation}
where $\mu:  \Xi \rightarrow \reals$ is a predicate, and is assumed to be a smooth and continuous function of a trajectory $\xi \in \Xi$. The constraint $\mu > 0$ forms the basic building block of the overall system specification $\varphi$. We say a predicate is satisfied if $\mu(\xi)$ is greater than $0$ or falsified otherwise. The operations $\neg, \wedge, \vee$ represent \textit{negation}, \textit{conjunction(and)} and \textit{disjunction(or)} 
, respectively. These basic operations can be combined to define complex boolean formula such as implication, $\rightarrow$, and if-and-only-if, $\leftrightarrow$ using the rules
\begin{equation}
\varphi \rightarrow \psi := \neg \varphi \vee \psi, \text{~and~} 
\varphi \leftrightarrow \psi :=  (\neg \varphi \wedge \neg \psi) \vee (\varphi \wedge \psi).
\label{eq:rewrite_rules}
\end{equation}


Since $\mu$ is a real valued function, we can convert these boolean logic statements into an equivalent equation with continuous output, which defines the \textit{quantitative semantics},
\begin{equation}
\begin{aligned}
\mu(\xi) & := \mu(\xi),
&(\varphi \wedge \psi )(\xi) &:= \min(\varphi(\xi), \psi(\xi)), \\
\neg \mu(\xi) & := -\mu(\xi),
&(\varphi \vee \psi)(\xi)  &:= \max(\varphi(\xi), \psi(\xi)).
\end{aligned}
\label{eq:convert_spec_to_function}
\end{equation}
This allows us to confirm that a logic statement~$\varphi$ holds true for all trajectories generated by simulators~$\envs$, by confirming that the function~$\varphi(\env)$ takes positive values for all~${\env \in \envs}$.

In the quantitative semantics~\cref{eq:convert_spec_to_function}, the satisfaction of a requirement is no longer a yes or no answer, but can be quantified by a real number. The nature of this quantification is similar to that of a reward function, where lower values indicate a larger safety violation. This allows us to introduce a ranking among failures: ${\varphi(\env_1) < \varphi(\env_2)}$ implies $\env_1$ is a more "dangerous" failure case than $\env_2$. 
%
To guarantees safety, we have to take a pessimistic outlook, and denote $\varphi(\env) \leq 0$ as a violation and $\varphi(\env) > 0$ as satisfaction of the specification~$\varphi$.

\begin{example}
Let us look at the specification in~\cref{Ex:quadcopter_spec}, ${\varphi := (b < 0.3 )\rightarrow (h < 3)}$. Applying the re-write rule~\cref{eq:rewrite_rules}, this can be written as ${\neg(b < 0.3) \vee (h < 3)}$. Applying the quantitative semantics~\cref{eq:convert_spec_to_function}, we get ${\varphi = \max(b > 0.3, h < 3)}$, which consists of two predicates, ${\mu_1 = b - 0.3}$ and ${\mu_2 = 3 - h}$. Intuitively, this means $\varphi > 0$, i.e., the specification is satisfied, if the battery is greater than 30$\%$ or if the quadcopter flies at an altitude less than 3m .
\end{example}

\subsection{Gaussian Process}
\label{Sec:GP}

For general black-box systems, the dependence of the specification $\varphi(\cdot)$ on the parameters~$\env \in \envs$ is unknown \textit{a priori}. We use a GP to approximate each predicate $\mu(\cdot)$ in the domain $\envs$. We detail the modeling of $\varphi(\cdot)$ in~\cref{sec:main_section}. The following introduction about GPs is based on~\cite{3569}.

GPs are non-parametric regression method from machine learning, where the goal is to find an approximation of the nonlinear function $\mu : \envs \rightarrow \reals$ from an environment $\env \in \envs$ to the function value $\mu$. This is done by considering the function values $\mu(\env)$ to be random variables, such that any finite number of them have a joint Gaussian distribution.

The Bayesian, non-parametric regression is based on a prior mean function and the kernel function $k(\env, \env')$, which defines the covariance between the function values $\mu(\env), \mu(\env')$ at two points $\env, \env' \in \envs$. We set the prior mean to zero, since we do not have any knowledge about the system. The choice of kernel function is problem-dependent and encodes assumptions about the unknown function.

We can obtain the posterior distribution of a function value $\mu(\env)$ at an arbitrary state $\env \in \envs$ by conditioning the GP distribution of $\mu$ on a set of $n$ past measurements,
${\mathbf{y}_n = (\hat{\mu}(\env_1),\dots,\hat{\mu}(\env_n))}$ at environment scenarios ${W_n = \{\env_1,\dots,\env_n\}}$, where $\hat{\mu}(\env) = \mu(\env) + \omega$ and ${\omega \sim \mathcal{N}(0, \sigma^2) }$ is Gaussian noise. The posterior over $\mu(\env)$ is a GP distribution again, with mean $m_n(\env)$, covariance $k_n(\env,\env′)$, and variance $\sigma_n(\env)$:
\begin{equation}
\begin{split}
\label{Eqn:Predict_GP}
m_n(\env) &= \mathbf{k}_n(\env) (\mathbf{K}_n + \mathbf{I}_n \sigma^2)^{-1} \mathbf{y}_n, \\
k_n(\env, \env') & = k(\env, \env') - \mathbf{k}_n(\env)(\mathbf{K}_n + \mathbf{I}_n \sigma^2)^{-1}\mathbf{k}_n^T(\env'), \\
\sigma_n^2(\env) & = k_n(\env,\env'),
\end{split}
\end{equation}
where the vector $\mathbf{k}_n(\env) =  [k(\env, \env_1), \dots , k(\env, \env_n)]$  contains the covariances between the new environment, $\env$, and the environment scenarios in $W_n$, the kernel matrix ${ \mathbf{K}_n \in \reals^{n\times n} }$ has entries ${[\mathbf{K}_n](i,j) = k(\env_i, \env_j)}$, with ${i,j \in \{1,\dots,n\}}$, and $\mathbf{I}_n \in \reals^{n\times n}$ is the identity matrix.

\subsection{Bayesian Optimization (BO)}
\label{sec:bayesian_optimization}

In the following we use BO in order to find the minimum of the unknown function~$\varphi$, which we construct using the GP models on~$\mu$ in~\cref{sec:main_section}. BO uses a GP model to query parameters that are informative about the minimum of the function. In particular, the \textsc{GP-LCB} algorithm from~\cite{srinivas2009gaussian} uses the GP prediction and associated uncertainty in~\cref{Eqn:Predict_GP} to trade off exploration and exploitation by, at iteration~$n$, selecting an environment according to
\begin{equation}
\label{Eqn:BO_f_acqu}
\env_n = \argmin_{\env \in \envs} m_{n-1}(\env) - \beta_n^{1/2} \sigma_{n-1}(\env) ,
\end{equation}
where $\beta_n$ determines the confidence interval. We provide an appropriate choice for~$\beta_n$ in~\cref{thm:verif}. 

At each iteration,~\cref{Eqn:BO_f_acqu} selects parameters for which the lower confidence bound of the GP is minimal. Repeatedly evaluating the true function~$\varphi$ at samples given by~\cref{Eqn:BO_f_acqu} improves the GP model and decreases uncertainty at candidate locations for the minimum, such that the global minimum is found eventually~\cite{srinivas2009gaussian}.


\section{Active Testing for Counterexamples}
\label{sec:main_section}

In this section, we show how to model specifications~$\varphi$ in~\cref{Eqn:BoolFalse} using GPs without violating smoothness assumptions and use this to find adversarial counterexamples.

In order to use BO to optimize~\cref{Eqn:BoolFalse}, we need to construct reliable confidence intervals on~$\varphi$. However, if we were to model $\varphi$ as a GP with commonly-used kernels, it would need it to be a smooth function of~$\env$. Even though the predicates,~$\mu$, are typically smooth functions of the trajectories, and hence smooth in $\env$, conjunction and disjunction ($\min$ and $\max$) in~\cref{eq:convert_spec_to_function} are non-smooth operators that render $\varphi$ to become non-smooth as well.
Instead, we exploit the structure of the specification~$\varphi$ and decompose~$\varphi$ into a parse tree, where the leaf nodes are the predicates.

\begin{definition}[Parse Tree $\tree$]
Given a specification formula $\varphi$, the corresponding parse tree,~$\tree$, has leaf nodes that correspond to function predicates, while other nodes are $\max$ (disjunctions) and $\min$ (conjunctions).
\end{definition}
A parse tree is an equivalent graphical representation of~$\varphi$. For example, consider the specification
\begin{equation}
\label{Eqn:Example_parseTree}
\varphi := (\mu_1 \vee \mu_2) \rightarrow (\mu_3 \vee \mu_4) = (\neg \mu_1 \wedge \neg \mu_2) \vee (\mu_3 \vee \mu_4),
\end{equation}
where the second equality follows from De-Morgan's law.
We can obtain an equivalent function~${ \varphi(\env) }$ with~\cref{eq:convert_spec_to_function},
\begin{equation}
\begin{aligned}
\varphi(\env) = \max\big( &\min (-\mu_1(\env),\, -\mu_2(\env)), \\ & \max (\mu_3(\env),\, \mu_4(\env)) \big) .
\end{aligned}
\label{eq:parse_tree_function}
\end{equation}
The parse tree, $\tree$, for $\varphi$ in~\cref{eq:parse_tree_function} is shown in~\cref{Fig:parse_tree}. We can use the parse tree to decompose any complex specification into~$\min$ and~$\max$ functions of the individual predicates; that is, $\varphi(\env) = \tree(\mu_1(\env), \dots, \mu_q(\env))$.

\begin{figure}[t]
\centering
\includegraphics[scale=0.45]{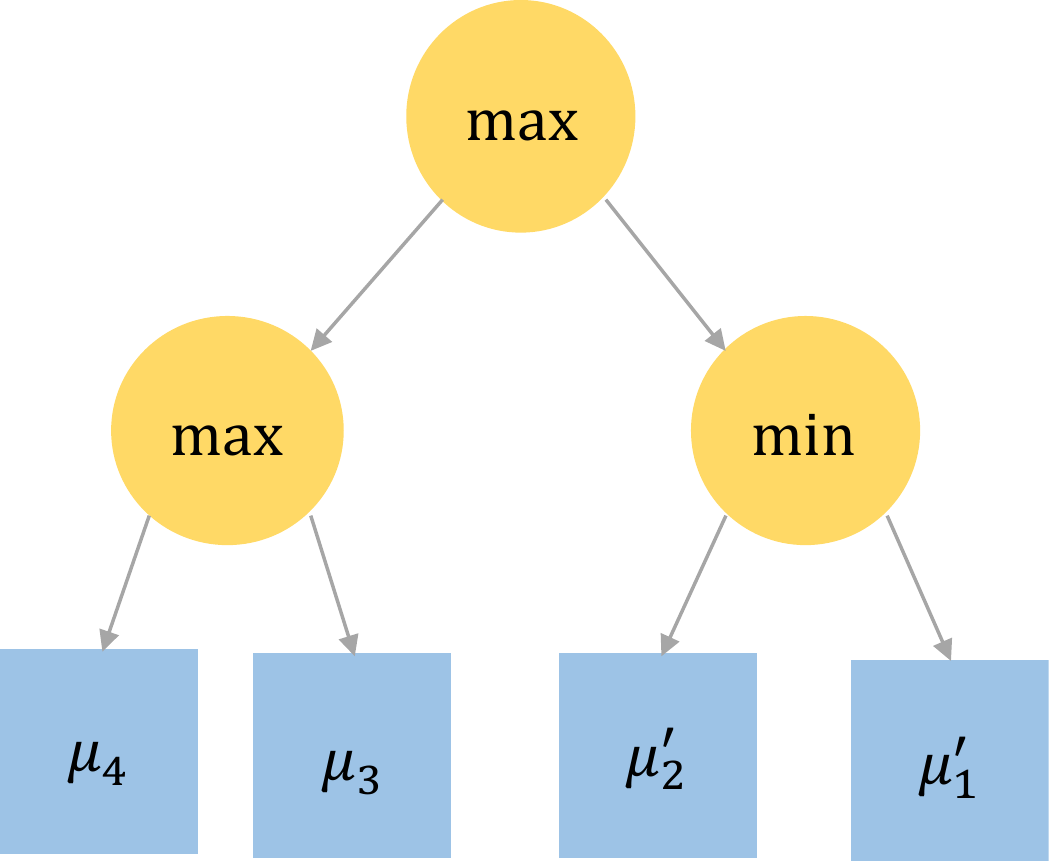}
\caption{Equivalent parse tree~$\tree$ for~$\varphi$ in~\cref{Eqn:Example_parseTree} to the function~\cref{eq:parse_tree_function}. We replace the predicates~$\mu_i$ with their corresponding pessimistic GP predictions to obtain a lower bound on~$\varphi(\env)$.}
\label{Fig:parse_tree}
\end{figure}


We now model each predicate~$\mu_i(w)$ in the parse tree $\tree$ of $\varphi$ with a GP and combine them with the parse tree to obtain confidence intervals on the overall specification~$\varphi(\env)$ for BO.
GP-LCB as expressed in~\cref{Eqn:BO_f_acqu} can be used to search for the minimum for a single GP. A key insight to extending~\cref{Eqn:BO_f_acqu} across multiple GPs, is that the minimum of~\cref{Eqn:BoolFalse} is, with high probability, lower bounded by the lower-confidence interval of one of the GPs used to model the predicates of $\varphi$. This is because, the $\max$ and $\min$ operators do not change the value of the predicates, but only make a choice between them. As a consequence, we can model the smooth parts of $\varphi$, i.e., the predicates, using GPs and then consider the non-smoothness through the parse tree.

For each predicate~$\mu_i$ in the parse tree $\tree$ of $\varphi$, we construct a lower confidence bound~$l_i = m^i_{n-1}(\env) - \beta_n^{1/2} \sigma^i_{n-1}(\env)$, where $m^i, \sigma^i$ are the mean and standard deviation of the GP corresponding to $\mu_i$. From this, we can construct a lower-confidence interval on~$\varphi$ as
$\tree(l_1(\env), \dots, l_q(\env))$, where we replace the $i$th leaf node~$\mu_i$ of the parse tree with the pessimistic prediction $l_i$ of the corresponding GP. Similar to~\cref{Eqn:BO_f_acqu}, the corresponding acquisition function for BO uses this lower bound to select the next evaluation point,
\begin{equation}
\label{Eqn:BO_multi_GP}
\env_n = \argmin_{\env \in \envs}\, \tree(l_1(\env), \dots, l_q(\env)).
\end{equation}
Intuitively, the next environment selected to simulate is the one that minimizes the worst-case predictions on~$\varphi$. Effectively, we propagate the confidence intervals associated with the GP for each predicates through the parse tree~$\tree$ in order to obtain predictions about~$\varphi$ directly. Note, that~\cref{Eqn:BO_multi_GP} does not return an environment sample that minimizes the satisfaction of all the predicates, it only minimizes the lower bound on~$\varphi$.

\cref{Algo:active_testing} describes our active testing procedure.
The algorithm proceeds by first computing the parse tree $\tree$ from the specification, $\varphi$. At each iteration~$n$ of BO, we select new environment parameters~$\env_n$ according to~\cref{Eqn:BO_multi_GP}. We then simulate the system with parameters~$\env_n$ and evaluate each predicate~$\mu_i$ on the simulated trajectories. Lastly, we update each GP with the corresponding measurement of~$\mu_i$. The algorithm either returns a counterexample that minimizes~\cref{Eqn:BoolFalse}; or when~$\tree(l_1(\env), \dots, l_q(\env))$ is greater then zero, and we can conclude that the system has been verified.

\subsection{Theoretical Results}
\label{sec:theory}

We can transfer theoretical convergence results for GP-LCB~\cite{srinivas2009gaussian} to the setting of~\cref{Algo:active_testing}. To do this, we need to make structural assumptions about the predicates. In particular, we assume that they have bounded norm in the Reproducing Kernel Hilbert Space (RKHS, \cite{steinwart2008support}) that corresponds to the GP's kernel. These are well-behaved functions of the form~$\mu_i(\env) = \sum_{j=0} \alpha_j k_i(\env, \env_j)$ with representer points~$\env_j$ and weights~$\alpha$ that decay sufficiently quickly. We leverage theoretical results from~\cite{ChowdhuryG17} and~\cite{Berkenkamp2016SafeOpt} that allow us to build reliable confidence intervals using the GP models from~\cref{Sec:GP}. We have the following result.
\begin{restatable}{theorem}{maintheorem}
Assume that each predicate $\mu_i$ has RKHS norm bounded by~$B_i$ and that the measurement noise is $\sigma$-sub-Gaussian. Select $\delta \in (0,1)$, $\env_n$ according to~\cref{Eqn:BO_multi_GP}, and let ${\beta^{1/2}_n = \sum_i B_i + 4 \sigma \sqrt{ 1 + \ln(1/\delta) + \sum_i I(\mathbf{y}^i_{n-1}; \mu_i)}}$. If~$\tree(l_1(\env_n), \dots, l_q(\env_n)) > 0$, then with probability at least $1-\delta$ we have that $\min_{\env \in \envs} \varphi(\env) > 0$  and the system has been verified against all environments in~$\envs$.
\label{thm:verif}
\end{restatable}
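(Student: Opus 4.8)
The plan is to combine three ingredients: a confidence bound that holds simultaneously for all leaf predicates, the monotonicity of the parse tree, and the minimizing property of the acquisition rule in~\cref{Eqn:BO_multi_GP}. Throughout, I would work with $\varphi$ in negation normal form, so that its parse tree $\tree$ has only $\min$ and $\max$ at its internal nodes and its $q$ leaves are smooth (possibly negated) predicates, each modeled by its own GP with lower confidence bound $l_i = m^i_{n-1}(\env) - \beta_n^{1/2}\sigma^i_{n-1}(\env)$.

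First, I would establish the probabilistic core. For a single predicate $\mu_i$ with RKHS norm bounded by $B_i$ and $\sigma$-sub-Gaussian noise, the self-normalized confidence results of~\cite{ChowdhuryG17}, as adapted in~\cite{Berkenkamp2016SafeOpt}, guarantee that with high probability and uniformly over all $\env \in \envs$ and all iterations $n$,
\[
 \mu_i(\env) \geq m^i_{n-1}(\env) - \beta_n^{1/2}\,\sigma^i_{n-1}(\env) = l_i(\env).
\]
Since negation preserves both the RKHS norm bound and the sub-Gaussianity of the noise, the same statement applies verbatim to negated leaves. The stated choice of $\beta_n$ --- which aggregates the norms as $\sum_i B_i$ and the information gains as $\sum_i I(\mathbf{y}^i_{n-1};\mu_i)$ under a single $\ln(1/\delta)$ --- is precisely what makes all $q$ of these bounds hold \emph{simultaneously} with probability at least $1-\delta$. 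Concretely, I would charge the per-leaf failure probabilities to the single budget $\delta$ (a union bound over the $q$ GPs, or equivalently a joint self-normalized martingale argument on the stacked predictions) and check that $\beta_n^{1/2}$ dominates each leaf's individual requirement. Denote by $\mathcal{E}$ the resulting event, on which $\mu_i(\env) \geq l_i(\env)$ for every leaf $i$ and every $\env$; then $\Pr(\mathcal{E}) \geq 1-\delta$.

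Second, I would exploit that $\tree$ is composed solely of $\min$ and $\max$, both non-decreasing in each argument, so $\tree$ is monotone in its leaves. Hence, on $\mathcal{E}$, replacing each leaf value by its pessimistic surrogate $l_i(\env)$ can only decrease the output, giving
\[
 \varphi(\env) = \tree\big(\mu_1(\env),\dots,\mu_q(\env)\big) \;\geq\; \tree\big(l_1(\env),\dots,l_q(\env)\big)
 \quad\text{for all } \env \in \envs.
\]
Finally, the definition of $\env_n$ in~\cref{Eqn:BO_multi_GP} as the minimizer of $\tree(l_1(\cdot),\dots,l_q(\cdot))$ gives $\tree(l_1(\env_n),\dots,l_q(\env_n)) \leq \tree(l_1(\env),\dots,l_q(\env))$ for every $\env$. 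Chaining the two inequalities yields, on $\mathcal{E}$,
\[
 \min_{\env \in \envs} \varphi(\env) \;\geq\; \tree\big(l_1(\env_n),\dots,l_q(\env_n)\big) \;>\; 0,
\]
where the strict inequality is the hypothesis of the theorem, and the conclusion holds with probability at least $\Pr(\mathcal{E}) \geq 1-\delta$.

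I expect the main obstacle to be the first step: calibrating the single aggregate $\beta_n$ so that all $q$ leaf confidence intervals are valid at once while spending only the probability budget $\delta$, i.e.\ justifying the combined $\sum_i B_i$ and $\sum_i I(\mathbf{y}^i_{n-1};\mu_i)$ form with a lone $\ln(1/\delta)$ rather than a naive per-leaf $\ln(q/\delta)$ union bound. Once this calibration is in place, the monotonicity and argmin-chaining steps are essentially immediate.
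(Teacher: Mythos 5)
Your proposal is correct and follows essentially the same route as the paper: simultaneous lower confidence bounds on the leaf predicates, monotonicity of the $\min$/$\max$ parse tree, and chaining through the minimizing property of~\cref{Eqn:BO_multi_GP}. The one step you flag as the main obstacle --- obtaining the aggregate $\beta_n$ with a single $\ln(1/\delta)$ rather than a per-leaf union bound --- is resolved in the paper exactly by the ``stacked'' argument you mention: all predicates are folded into one GP over the extended input space $\envs \times \mathcal{I}$ with a block-diagonal kernel, so the single-function confidence bound of~\cite{ChowdhuryG17} applies with RKHS norm $\sum_i B_i$ and additively decomposed mutual information.
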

Here~$I(\mathbf{y}^i_{n-1}; \mu_i)$ is the mutual information between~$\mathbf{y}^i_{n-1}$, the~$n-1$ noisy measurements of~$\mu_i$, and the GP prior of~$\mu_i$. This function was shown to be sublinear in~$n$ for many commonly-used kernels in~\cite{srinivas2009gaussian}, see the appendix for more details.
\cref{thm:verif} states that we can verify the system against adversarial examples with high probability, by checking whether the worst-case lower-confidence bound is greater than zero. We provide additional theoretical results about the existence of a finite~$n$ such that the system can be verified up to~$\epsilon$ accuracy in the appendix.

\begin{algorithm}[t]
\caption{Active Testing with Bayesian Optimization}\label{Algo:active_testing}
\begin{algorithmic}[1]
\Procedure{ActiveTesting}{$\varphi, \envs, \beta, \textit{GP}s$}
\State Build parse tree $\tree$ based on specification $\varphi$
\For {$n = 0,\dots$} \Comment{Until budget or convergence}
\State $l_i(\env) = \mu_i(\env) - \beta_n^{1/2} \sigma_i(\env), \, i=1,\dots,q$
\State $w_n = \argmin_{w \in \envs} \tree(l_1(\env), \dots, l_q(\env))$
\State Update each GP model of the predicates with \par
\hspace{1.5ex} measurements $(\env_n,\, \mu_i(\env_n))$.
\EndFor
\State return~$\min_i\, \varphi(w_i)$, the worst result.
\EndProcedure
\end{algorithmic}
\end{algorithm}


\section{Evaluation}
\label{sec:evaluation}

In this section, we evaluate our method on several challenging test cases. A Python implementation of our framework and the following experiments can be found at \mbox{\url{https://github.com/shromonag/adversarial_testing.git}}

In order to use~\cref{Algo:active_testing}, we have to solve the optimization problem~\cref{Eqn:BO_multi_GP}. In practice, different optimization techniques have been proposed to find the global minimum of the function. One popular algorithm is DIRECT~\cite{finkel2003direct}, a gradient-free optimization method. An alternative is to use gradient-based methods together with random-restarts. Particularly, we sample a large number of potential environment scenarios at random from $\envs$, and run seperate optimization routines to minimize~\cref{Eqn:BO_multi_GP} from these.

Another challenge is that the dimensionality of the optimization problem can often be very large. However, methods that allow for more efficient computation do exist. These methods reduce the effective size of the input space and thereby make the optimization problem more tractable. One possibility is to use random embedding to reduce the input dimension as done in Random Embedding Bayesian Optimization (REMBO~\cite{wang2013bayesian}). We can then model the GP in this smaller input dimension and carry out BO in the lower dimension input space.

\subsection{Modeling smooth functions vs non-smooth function}
In the following, we show the effectiveness of modeling smooth functions by GPs and considering the non-smooth operations in the BO search as opposed to modeling the non-smooth function by a single GP.

\label{Eval:smooth_non_smooth}
Consider the following, illustrative optimization problem,
\begin{equation}
\label{Eqn:sin_cos}
w^{*} = \argmin_{w \in (0, 10)} \max (\sin(w)+ 0.65, \cos(w)+0.65)
\end{equation}

\begin{figure*}
\centering
\begin{subfigure}[t]{0.32\textwidth}
\includegraphics[scale=0.6]{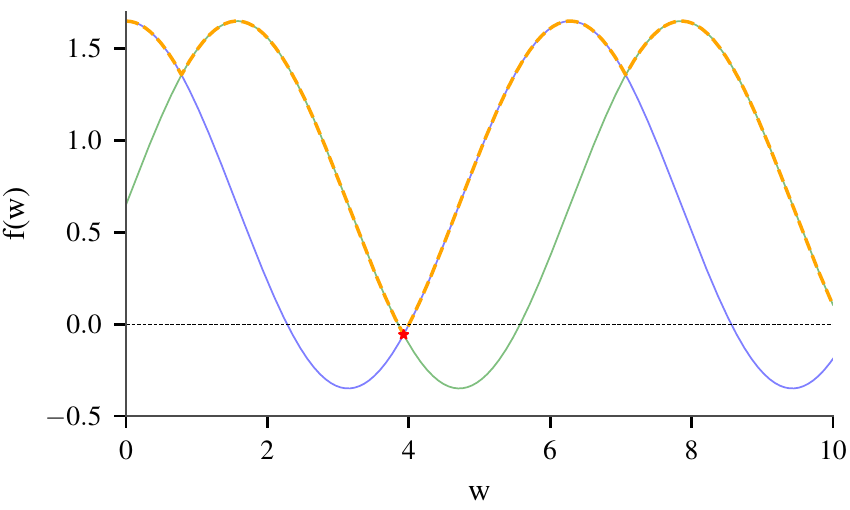}
\caption{True functions.}
\label{Fig:sin_cos}
\end{subfigure}
\hfill
\begin{subfigure}[t]{0.32\textwidth}
\includegraphics[scale=0.6]{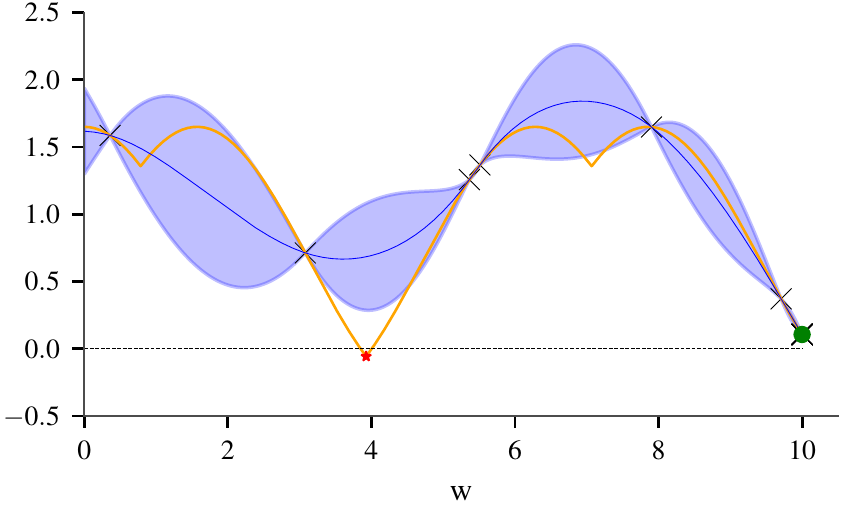}
\caption{Modeling the non-smooth~$\varphi(w)$ directly.}
\label{Fig:max_x}
\end{subfigure}
\hfill
\begin{subfigure}[t]{0.32\textwidth}
\includegraphics[scale=0.6]{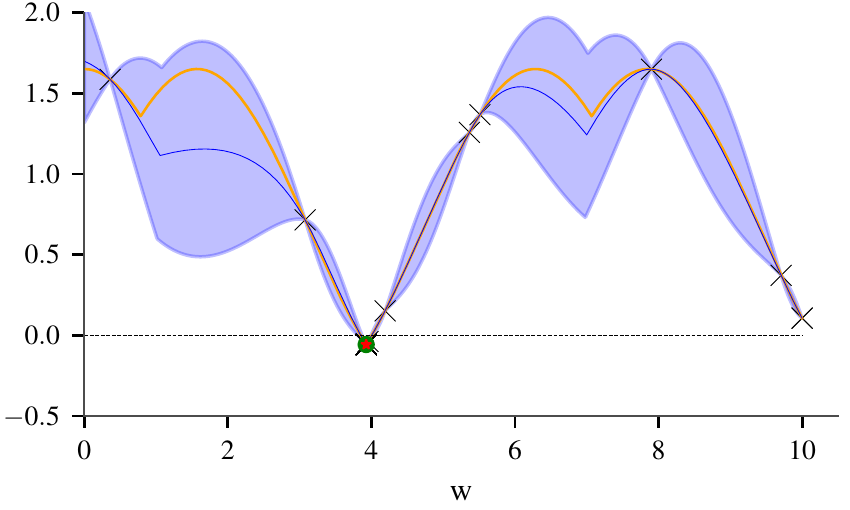}
\caption{Modeling and combining smooth predicates~$\mu$.}
\label{Fig:sin_cos_x}
\end{subfigure}
\caption{The dashed orange line in~\cref{Fig:sin_cos} represents the true, non-smooth optimization function in~\cref{Eqn:sin_cos} while the green and blue line represent $\sin(w)$ and $\cos(w)$ respectively. Modeling this function directly as a GP leads to model errors~\cref{Fig:max_x}, where the $95\%$ confidence interval of the GP (blue shaded) with  mean estimate (in blue line) does not capture the true function~$\varphi(w)$ in orange. In fact, the minimum (red star) is not contained within the shaded region, causing the optimization to diverge. BO converges to the green dot, where $\varphi(w)> 0$ which is not a counterexample. Instead, modeling the two predicates individually and combining them with the parse tree, leads to the model in~\cref{Fig:sin_cos_x}. Here, the true function is completely captured in the confidence interval. As a consequence, BO converges to the global minimum (the red star and green dot converge).}
\label{Fig:model}
\end{figure*}

We consider two modeling scenarios, one where we model $\max(\sin(w), \cos(w))$ as a single GP, and another where we model $\sin(w)$ by one GP and $\cos(w)$ by another.
We initialize the GP models for $\sin(w)$, $\cos(w)$ and $\max(\sin(w), \cos(w))$ with 5 samples chosen in random. We then use BO to find $w^{*}$. We were  able to model smooth functions like $\sin(w)$ and $\cos(w)$ with GPs, even with fewer samples. At each iteration of BO, we computed the next sample by solving for the $w \in (0, 10)$ which minimized the maximum across the two GPs. This quickly stabilizes to the true $w^*$~(\cref{Fig:sin_cos_x}).
When we model $\max(\sin(w), \cos(w))$ using a GP,  in~\cref{Fig:max_x}, the initial 5 samples were not able to model it well. In fact, the original function in orange is not contained within the uncertainty bounds of the GP.  Hence, in each iteration of BO, where we chose $w \in (0,10)$ which minimized this function, we were never able to converge $w^{*}$.  It is not surprising to see that, given these models, BO does not always converge when we model non-smooth functions such as in~\cref{Eqn:sin_cos}.

To support our claim, we repeat this experiment 15 times with different initial samples. In each  experiment we run BO for 50 iterations. When modeling $\sin(w)$ and $\cos(w)$ as separate GPs, BO stabilized to $w^{*}$ in about 5 iterations in all 15 experiments. However, when modeling $\max(\sin(w), \cos(w))$ as a single GP, it takes over 35 iterations to converge and in 5 out of the 15 cases, it did not converge to $w^{*}$. We show these two different behaviors in~\cref{Fig:stabilizing}.
\begin{figure*}
\centering
\begin{subfigure}[t]{0.45\linewidth}
\centering
\includegraphics[scale=0.75]{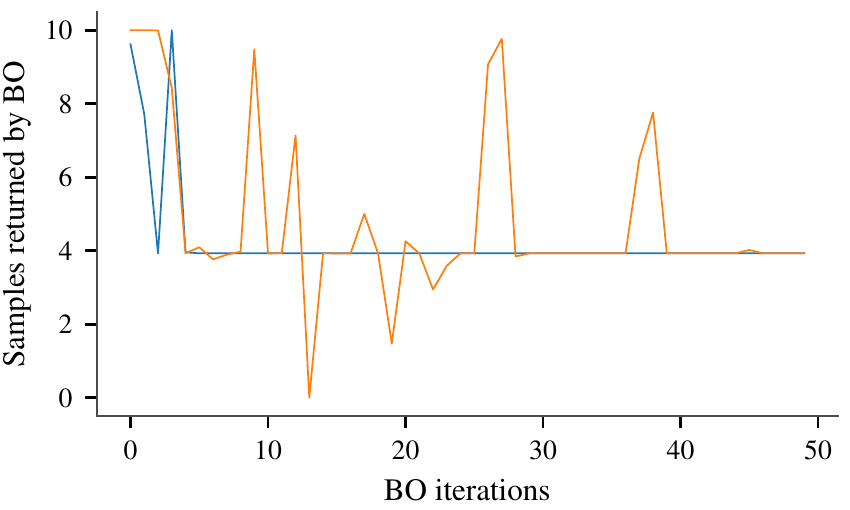}
\caption{Modeling as separate GPs take around 5 iterations to stabilize to $w^{*}$~(in blue), while modeling as a single GP takes around 45 iteration to stabilize to $w^{*}$~(in orange)}
\label{Fig:stab_late}
\end{subfigure}
\hfill
\begin{subfigure}[t]{0.45\linewidth}
\centering
\includegraphics[scale=0.75]{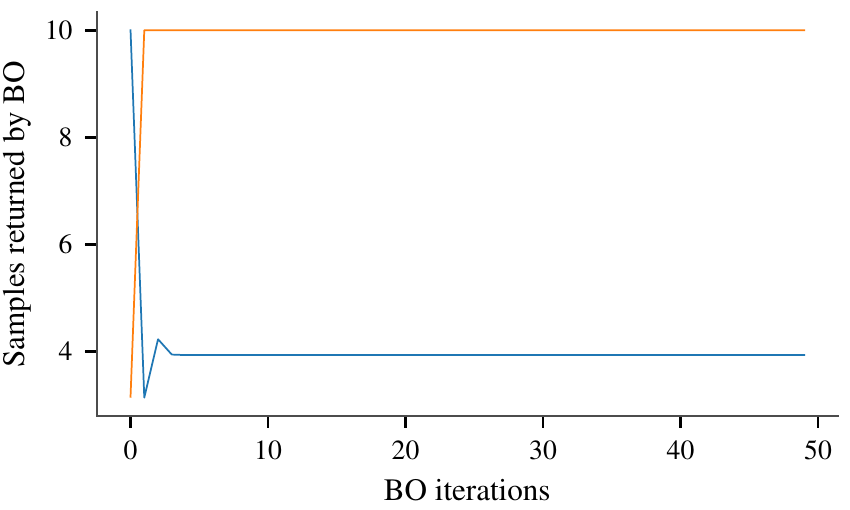}
\caption{Modeling as separate GPs take around 5 iterations to stabilize to $w^{*}$~(in blue), while modeling as a single GP does not stabilize to $w^{*}$~(in orange)}
\label{Fig:fail_joint}
\end{subfigure}
\caption{The orange and blue lines in~\cref{Fig:stab_late} and ~\cref{Fig:fail_joint} show the evolution of samples returned over the BO iterations when~\cref{Eqn:sin_cos} is modeled as a single GP and multiple GPs respectively for two different initialization. We see the that when modeling as a single GP, it takes longer to stabilize to $w^{*}$ and in some cases~(\cref{Fig:fail_joint}) does not stabilize to $w^{*}$.}
\label{Fig:stabilizing}
\end{figure*}

\subsection{Collision Avoidance with High Dimensional Uncertainty}
\label{Sec:car_high}
Consider an autonomous car that travels on a straight road with a obstacle at $x_{obs}$. We require that the car can come to a stop before colliding with an obstacle. The car has two states; location, $x$, and velocity, $v$; and one control input acceleration; $a$. The dynamics of the car is given by,
\begin{equation}
\dot{x} = v , \quad
\dot{v}  = a.
\end{equation}

Our safety specification for collision avoidance is given by, $\varphi = \min(x_{obs} - x(t))$, i.e., the minimum distance between the position of the car and the obstacle over a horizon of length $100$. We assume that the car does not know where the obstacle is \textit{a priori}, but receives locations of the obstacle through a sensor at each time instant, $x_s(t)$. The controller is a simple linear state feedback control, $K$, such that at time $t$, $a(t) = K \cdot \left[ \begin{matrix} x(t)-x_s(t), \,v(t) \end{matrix} \right]^\mathrm{T}$.

We assume that the car initially starts at location $x_{init} = 0$, with a velocity $v_{init} = \unit[3]{m/s}$. Let the obstacle be at $x_{obs} = 5$, which is not known by the car. Instead, it receives sensor readings for the location of the obstacle such that $x_s = [4.5, 5.5]$.  If $\varphi$ is negative, then $x(t) > x_{obs}$ for some $t$ which signifies collision. Moreover, we constrain the acceleration to lie in $a \in [-3, 3]$.

The domain of our uncertainty is $\envs = [4.5, 5.5]^{100}$, i.e., the sensor readings $x_s$ over the horizon $H = 100$.
We compare across three experimental setups, first, we model the GP in the original space of $\envs$ i.e., with $100$ inputs; second, we model the GP in a lower dimension input space as described in the preamble of this section; and third, we randomly sample inputs and test them. We run BO for 250 iterations on the GPs, and consider 250 random samples for the random testing. We repeat this experiment 10 times and show our results in~\cref{Fig:car_example}.
\begin{figure}[t]
\centering
\includegraphics[scale=0.75]{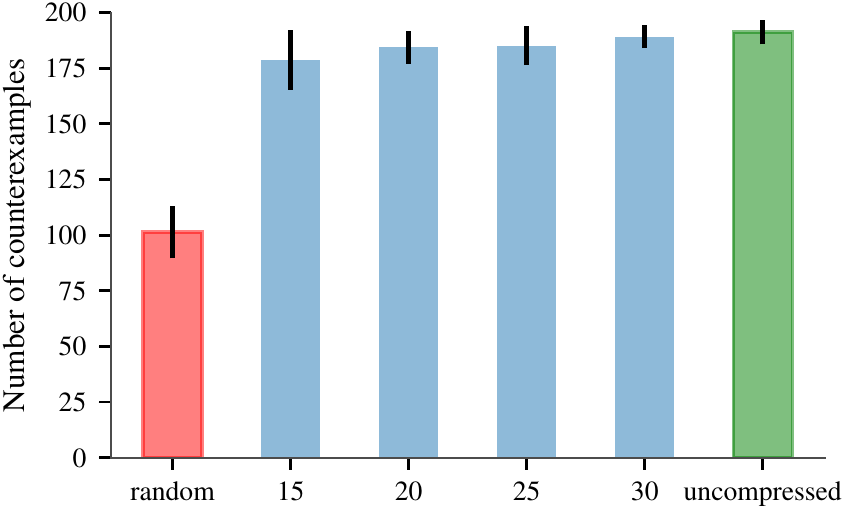}
\caption{The red, blue and green bars shows the average number of counterexamples found using random sampling; applying BO on the reduced input space and original input space respectively for the example in~\cref{Sec:car_high}.  The black lines show the standard deviation across the experiments.}
\label{Fig:car_example}
\end{figure}
The green and blue bar in~\cref{Fig:car_example} show the  average number of counterexamples returned running BO on the GP defined over the original input space and in the low dimension input space. In general, active testing in the high-dimensional input space gives the best results, which deteriorates with an increase in compression of the input space. Random testing, shown in red performs the worst. This is not surprising as, (1) $250$ samples is not sufficient to cover an input space of $100$ dimensions uniformly; and (2) the samples are all independent of each other.
Moreover, in the uncompressed input case, the specification evaluated at the worst counterexample, $\varphi(w^{*})$, has a mean and standard deviation of  $-0.0138$  and $0.004$ as compared to $-0.0067$ and $0.0011$ for random sampling.

\subsection{OpenAI Gym Environments}
We interfaced our tool with environments from OpenAI gym~\cite{1606.01540} to test controllers from Open AI baselines~\cite{baselines}. For brevity, we refer the details of the environments to~\cite{ai_envs}.
In both case studies, we introduce uncertainty around the parameters the controller has been trained for. The rationale behind this is that the parameters in a simulator are an estimate of the true values. This ensures that counterexamples found, can indeed occur in the real system.

\subsubsection{Reacher}
\label{Sec:reach}
In the reacher environment, we have a 2D robot trying to reach a target. For this environment we have six sources of uncertainty: two for the goal position, $(x_{goal}, y_{goal}) \in [-0.2, 0.2]^2$, two for state perturbations $(\delta_x, \delta_y) \in [-0.1, 0.1]^2$ and two for velocity perturbations $(\delta_{vx}, \delta_{vy}) \in [-0.005, 0.005]^2$.
The state of the reacher is tuple with the current location, $\textbf{x} =(x, y)$, velocity $\textbf{v} = (v_x, v_y)$, and rotation, $\theta$. A trajectory of the system, $\xi$, is a sequence of states over time, i.e., $\xi = (\textbf{x}(t), \textbf{v}(t), \theta(t)), t = 0, 1, 2, \dots$.
Our uncertainty space is, $\envs =  [-0.2, 0.2]^2 \times [-0.1, 0.1]^2 \times  [-0.005, 0.005]^2$.
Given an instance of $w \in \envs$, the trajectory, $\xi$, of the system is uniquely defined.

We trained a controller using the Proximal Policy Optimization (PPO)~\cite{SchulmanWDRK17} implementation available at Open AI baselines. We determine a trajectory to be safe if either the reacher reaches the goal, or if it does not rotate unnecessarily.
This can be captured as $\varphi = \mu_1 \vee \mu_2$, where,
$\mu_1(w)$ is the minimum distance between the trajectory and the goal position, and $\mu_2$ is total rotation accumulated over the trajectory; and its continuous variant, $\varphi = \max(\mu_1, \mu_2)$.

Using our modeling approach, we model this using two GPs, one for $\mu_1$ and another for $\mu_2$. We compare this to modeling $\varphi$ as a single GP and random sampling. We run 200 BO iterations and consider 200 random samples for random testing. We repeat this experiment 10 times.
\begin{figure}
\centering
\includegraphics[scale=0.75]{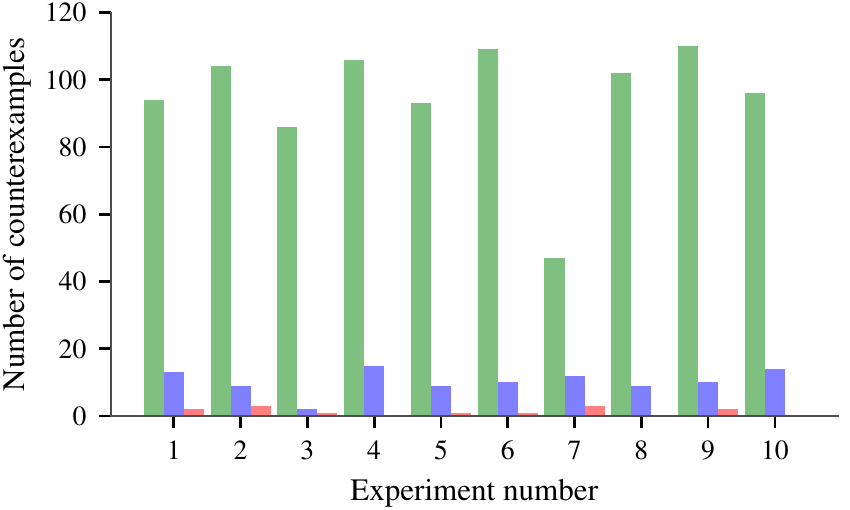}
\caption{The green, blue and red bars show the number of counter examples generated when modeling $\mu_1, \mu_2$ as separate GPs; modeling $\varphi$ as a single GP and random testing respectively for the reacher example~(\cref{Sec:reach}). Our modeling paradigm, finds more counterexamples compared to the other two methods.}
\label{Fig:reacher_exps}
\end{figure}
In~\cref{Fig:reacher_exps}, we plot the number of counterexamples found by each of the three methods over 10 runs of the experiment.  Modeling the predicates by separate GPs and applying BO across them~(shown in green) consistently performs better than applying BO on a single GP modeling $\varphi$~(shown in blue) and random testing (shown in red). We see the that random testing performs very poorly, in some cases (experiment runs $4, 8, 10$) finds no counterexamples.

By modeling the predicates separately, the specification evaluated at the worst counterexample, $\varphi(w^{*})$, has a mean and standard deviation of  $-0.1283$  and $0.0006$ as compared to $-0.1212$ and $0.0042$ when considering a single GP. This suggests, that using our modeling paradigm BO converges (since the standard deviation is small) to a more falsifying counterexample (since the mean is smaller).

\subsubsection{Mountain Car Environment}
\label{Sec:mc}
The mountain car environment in OpenAI gym, is a car on a one-dimensional track, positioned between two mountains. The goal is to drive the car up the mountain on the right. The environment comes with one source of uncertainty, the initial state $x_{init} \in [-0.6, -0.4]$.
We introduced four other sources of uncertainty, for the initial velocity, $v_{init} \in [-0.025, 0.025]$; goal location, $x_{goal} \in [0.4, 0.6]$; maximum speed, $v_{max} \in [0.55, 0.75]$ and maximum power magnitude, $p_{max} \in [0.0005, 0.0025]$.
The state of the mountain car is a tuple with the current location, $x$, and velocity, $v$. A trajectory of the system, $\xi$, is a sequence of states over time, i.e., $\xi = (x(t), v(t)), t = 0, 1, 2, \dots$.
Our uncertainty space is given by, $\envs = [-0.6, -0.4] \times [-0.025, 0.025] \times  [0.4, 0.6] \times [0.55, 0.75] \times [0.0005, 0.0025]$.  Given an instance of $w \in \envs$, the trajectory, $\xi$, of the system is uniquely defined.

We trained two controllers one using PPO and another using an actor critic method (DDPG) for continuous Deep Q-learning~\cite{LillicrapHPHETS15}. We determine a trajectory to be safe, if it reaches the goal quickly or if does not deviate too much from its initial location and always maintains its velocity in some bound. Our safety specification can be written as $\varphi = \mu_1 \vee (\mu_2 \wedge \mu_3)$, where,
$\mu_1(w)$ is time taken to reach the goal, $\mu_2$ is the deviation from the initial location and $\mu_3$ is the deviation from the  velocity bound; and its continuous variant of $\varphi = \max(\mu_1, \min(\mu_2, \mu_3))$.
We model $\varphi$, by modeling each predicate, $\mu$, by a GP. We compare this to modeling $\varphi$ with a single GP and random sampling.
We run 200 BO iterations for the GPs and consider 200 random samples for random testing. We repeat this experiment 10 times. We show our results in~\cref{Fig:mc}, where we plot the number of counterexamples found by each of the three methods over 10 runs of the experiment for each controller.
~\cref{Fig:mc} demonstrates the strength of our approach. The number of counterexamples found by our method (in green bar) is much higher compared to random sampling (in red) and modeling $\varphi$ as a single GP (in blue). In~\cref{Fig:mc_ppo} the blue bars are smaller than even the ones in red, suggesting random sampling performs better than applying BO on the GP modeling $\varphi$. The is because the GP is not able to model $\varphi$, and is so far away from the true model, that the sample returned by the BO is worse than if were to sample randomly.

\begin{figure*}[t]
\centering
\begin{subfigure}[t]{0.45\textwidth}
\centering
\includegraphics[scale=0.75]{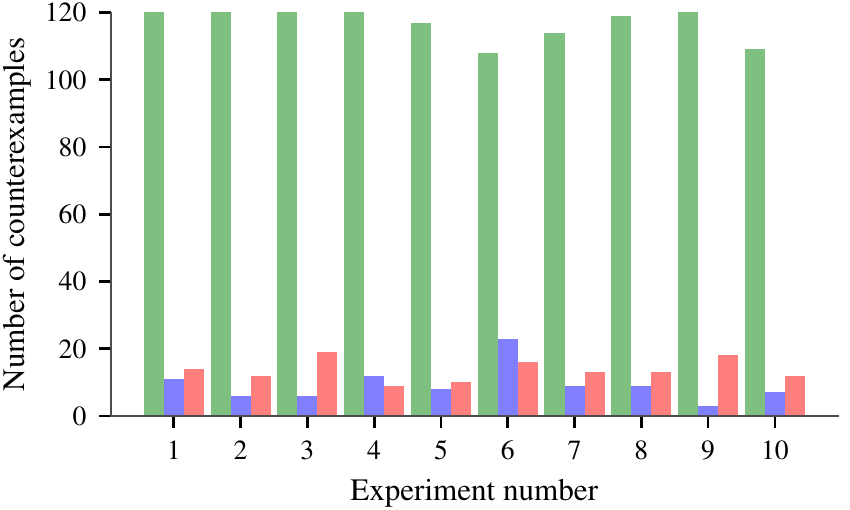}
\caption{Controller trained with PPO}
\label{Fig:mc_ppo}
\end{subfigure}
\begin{subfigure}[t]{0.45\textwidth}
\centering
\includegraphics[scale=0.75]{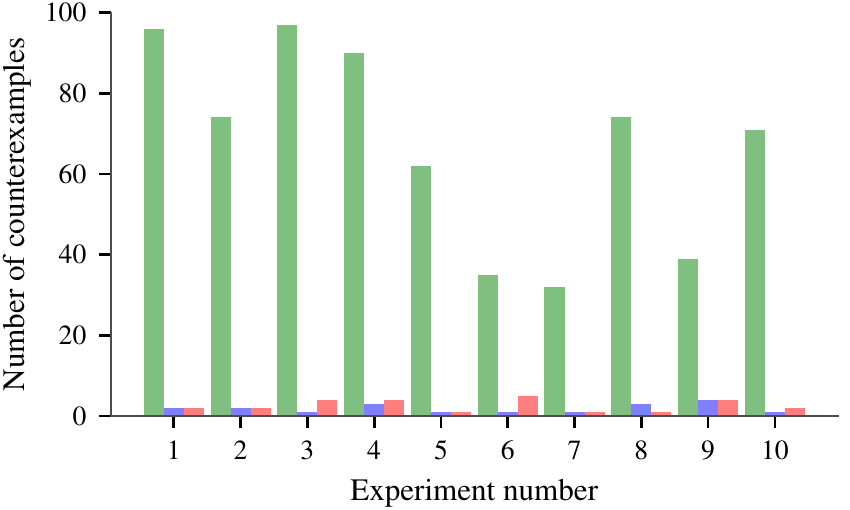}
\caption{Controller trained with DDPG}
\label{Fig:mc_ddpg}
\end{subfigure}
\caption{The green, blue and red bars show the number of counter examples generated when modeling $\mu_1, \mu_2$ as separate GPs, modeling $\varphi$ as a GP and random testing respectively for the mountain car example~(\cref{Sec:mc}). While our modeling paradigm, finds orders of magnitude more counterexample compared to the other two methods, we notice that modeling $\varphi$ as a single GP performs much worse than random sampling for the controller trained with PPO~\cref{Fig:mc_ppo} and comparable for the controller trained with DDPG~\cref{Fig:mc_ddpg}.}
\label{Fig:mc}
\end{figure*}

This is further highlighted by the value of the specification at worst counterexample, $\varphi(w^*)$. The mean and standard deviation for $\varphi(w^*)$ over the 10 experiment runs is $-0.5435$ and $0.028$ for our method, $-0.3902$ and $0.0621$ when $\varphi$ is modeled as a single GP; and $-0.04379$ and $0.0596$ for random sampling. A similar but less drastic result holds in the case of the controller trained with DDPG.


\section{Conclusion}
We presented an \textit{active testing} framework that uses Bayesian Optimization to test and verify closed-loop robotic systems in simulation. Our framework handles complex logic specifications and models them efficiently using Gaussian Processes in order to find adversarial examples faster. We showed the effectiveness of our framework on controllers designed on OpenAI gym environments. As future work, we would like to extend this framework to test more complex robotic systems and find regions in the environment parameter space where the closed-loop control is expected to fail.


\section*{Acknowledgments}
Research reported in this paper was sponsored by the Army Research Laboratory and was accomplished under Cooperative Agreement Number W911NF-17-2-0196 \footnote{The views and conclusions contained in this document are those of the authors and should not be interpreted as representing the oﬃcial policies, either expressed or implied, of the Army Research Laboratory or the U.S. Government. The U.S. Government is authorized to reproduce and distribute reprints for Government purposes notwithstanding any copyright notation here on.} and was accomplished under Cooperative Agreement Number W911NF-17-2-0196; and in part by Toyota under the iCyPhy center.

\bibliographystyle{IEEEtran}
\bibliography{ICRA2018}

\appendix

\section{Proofs}

In this section, we prove the convergence of our algorithm under specified regularity assumptions on the underlying predicates.
Consider the specification
\begin{equation}
    \varphi(\env) = \tree( \mu_1(\env), \dots, \mu_{q}(\env)),
\end{equation}
where $q$ represents the number of predicates. Let the domain of the predicate indices be represented by, $\mathcal{I}= \{1, \dots, q\}$. The convergence proofs for classical Bayesian optimization in~\cite{srinivas2009gaussian,ChowdhuryG17} proceed by building reliable confidence intervals for the underlying function and then showing, that these confidence intervals concentrate quickly enough at the location of the optimum under the proposed evaluation strategy. For ease of exposition, we assume that measurements of each predicate~$\mu_i$ are corrupted by the same measurement noise.

To leverage these proofs, we need to account for the fact that our GP model is composed of several individual predicates and that we obtain one measurement for each predicates at every iteration of the algorithm.

We start by defining a composite function ${f:\envs \times \mathcal{I} \rightarrow \reals}$, which returns the function values for the individual predicates indexed by~$i$.
\begin{equation}
f(\env, i) = \mu_i(\env)
\end{equation}
The function $f(\cdot, \cdot)$ is a single output function, which can be modeled with a single GP with a scalar output over the extended input space, $\envs \times \mathcal{I}$. For example, if we assume that the predicates are independent of each other, the kernel function for $f$ would look like,
\begin{equation}
\begin{split}
    k((\env, i), (\env', i')) = \begin{cases}
    k_i(\env, \env') \text{ if } i = i' \\
    0 \text{ otherwise}
    \end{cases}
\end{split},
\label{eq:independent_kernel}
\end{equation}
where $k_i$ is the kernel function corresponding to the GP for the $i-$th predicate, $\mu_i$. It is straightforward to include correlations between functions in this formulation too.

This reformulation allows us to build reliable confidence intervals on the underlying predicates, given regularity assumptions. In particular, we make the assumption that the function~$f$ has bounded norm in the Reproducing Kernel Hilbert Space (RKHS, \cite{steinwart2008support}) corresponding to the same kernel~$k(\cdot, \cdot)$ that is used for the GP on~$f$.

\begin{remark}
Note, that this model is more general then the case where we assume that each predicate, $\mu_i$, individually has bounded RKHS norm $B_i$. In this case, the function, $f(\env, i)$ has RKHS norm with respect to the kernel in~\cref{eq:independent_kernel} bounded by~${ B = \sum_i^{q} B_i }$.
\label{note:sum_of_bounded_rkhs}
\end{remark}

\begin{lemma}
Assume that $f$ has RKHS norm bounded by $B$ and the measurements are corrupted by $\sigma$-sub-Gaussian noise. If $\beta_{n \cdot q}^{1/2} = B + 4\sigma \sqrt{I(y_{q \cdot (n - 1)}; f) + 1 + \ln{1/\delta}}$, then the following holds for all environment scenarios, $\env \in \envs$, predicate indices, $i \in \mathcal{I}$, and iterations $n \geq 1$ jointly with probability at least $1- \delta$,
\begin{equation}
    |f(\env, i) - m^i_{q \cdot(n-1)}(\env, i)|\leq \beta_{q \cdot n}^{1/2} \sigma^i_{q \cdot (n -1)}(\env, i)
\end{equation}
\label{lem:mu_confidence}
\end{lemma}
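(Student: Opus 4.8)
The plan is to recognize that \cref{lem:mu_confidence} is essentially a direct application of existing single-GP confidence-interval results to the composite function~$f$ over the extended domain~${\envs \times \mathcal{I}}$. The whole point of introducing~${f(\env, i) = \mu_i(\env)}$ together with the block-diagonal kernel~\cref{eq:independent_kernel} is to collapse the $q$-predicate problem into a single scalar-valued regression problem, so that the standard machinery of~\cite{srinivas2009gaussian,ChowdhuryG17} applies essentially verbatim. Thus the first step is to invoke the RKHS confidence-bound theorem---specifically the version in~\cite{ChowdhuryG17} as used in~\cite{Berkenkamp2016SafeOpt}---which states that for a scalar function of bounded RKHS norm~$B$ observed under $\sigma$-sub-Gaussian noise, the GP posterior yields valid confidence intervals~${|f(x) - m_{t-1}(x)| \le \beta_t^{1/2}\sigma_{t-1}(x)}$ simultaneously for all inputs~$x$ and all~$t$, with probability at least~$1-\delta$, where~${\beta_t^{1/2} = B + 4\sigma\sqrt{I(y_{t-1};f) + 1 + \ln(1/\delta)}}$.

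The second step is bookkeeping on the number of measurements. Since \cref{Algo:active_testing} evaluates every predicate once per iteration, a single BO iteration contributes exactly~$q$ observations to the composite GP. Hence at the beginning of iteration~$n$ the GP has been conditioned on~${q\cdot(n-1)}$ data points, which is why the posterior quantities carry the index~${q\cdot(n-1)}$ and the confidence width is governed by~$\beta_{q\cdot n}$. Substituting~${t = q\cdot n}$ into the generic bound, together with~${B = \sum_i B_i}$ from~\cref{note:sum_of_bounded_rkhs}, reproduces the stated expression for~$\beta_{q\cdot n}^{1/2}$ and the claimed inequality for every pair~$(\env, i)$.

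The main obstacle---and the part requiring care rather than routine substitution---is verifying that the hypotheses of the cited theorem are genuinely satisfied by our sampling scheme. In particular, the sub-Gaussian concentration argument underlying~\cite{ChowdhuryG17} requires that the noise sequence forms a martingale difference sequence with respect to the natural filtration generated by the past queries and observations. I would confirm that collecting the $q$ predicate measurements in a fixed order within each iteration preserves this adaptive structure, so that each query point~$(\env, i)$ is measurable with respect to the history available before its observation. One must also check that~$I(y_{q(n-1)};f)$ is the mutual information for the \emph{composite} GP over~${\envs \times \mathcal{I}}$, and that the block-diagonal kernel~\cref{eq:independent_kernel} (or any correlated variant) is a valid positive-definite kernel on the extended domain, so that the RKHS and its associated information gain are well defined. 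Once these regularity conditions are in place, the lemma follows immediately from the cited result with no additional estimation.
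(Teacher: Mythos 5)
Your proposal is correct and follows essentially the same route as the paper, which likewise disposes of \cref{lem:mu_confidence} by direct appeal to the confidence-interval results of~\cite{ChowdhuryG17} and~\cite{srinivas2009gaussian} as extended to multiple measurements per iteration in~\cite{Berkenkamp2016SafeOpt}, applied to the composite function~$f$ on the extended domain~$\envs \times \mathcal{I}$. Your additional checks (the martingale-difference structure of the $q$ ordered observations per iteration and the positive-definiteness of the block-diagonal kernel) are exactly the regularity conditions the paper leaves implicit, so you have if anything been more careful than the original.
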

\begin{proof}
This follows directly from~\cite{Berkenkamp2016SafeOpt}, which extends the results from~\cite{ChowdhuryG17} and Lemma 5.1 from~\cite{srinivas2009gaussian} to the case of multiple measurements.
\end{proof}

The scaling factor for the confidence intervals,~$\beta_{n\cdot q}$, depends on the mutual information~$I(\mathbf{y}_{q \cdot (n-1)}; f)$ between the GP model of~$f$ and the~$q$ measurements of the individual predicates that we have obtained for each time step so far. It can easily be computed as
\begin{equation}
\begin{aligned}
    I(\mathbf{y}_{q \cdot (n-1)}; f) &= \log ( 1 + \frac{1}{\sigma^2} \mathbf{K}_{q \cdot (n-1)} ), \\
    &= \sum_{j=1}^{n - 1} \sum_{i=1}^q \log (1 + \sigma^2_{j \cdot q}(\env_j, i) / \sigma^2),
\end{aligned}
\label{eq:mutual_information}
\end{equation}
where~$\mathbf{K}_{q \cdot (n-1)}$ is the kernel matrix of the single GP over the extended parameter space and the inner sum in the second equation indicates the fact that we obtain~$q$ measurements at every iteration.

Based on these individual confidence intervals on~$\mu$, we can construct confidence intervals on~$\varphi$. In particular, let
\begin{equation}
    \begin{aligned}
    l_i(\env) &= m_{q\cdot(n - 1)}(\env, i) - \beta^{1/2}_{q \cdot n} \sigma_{q \cdot (n - 1)}(\env, i) \\
    u_i(\env) &= m_{q \cdot (n - 1)}(\env, i) + \beta^{1/2}_{q \cdot n} \sigma_{q \cdot (n - 1)}(\env, i)
    \end{aligned}
\end{equation}
be the lower and upper confidence intervals on each predicate. From this, we construct reliable confidence intervals on~$\varphi(\env)$ as follows:

\begin{lemma}
Under the assumptions of~\cref{lem:mu_confidence}. Let~$\tree$ be the parse tree corresponding to~$\varphi$. Then the following holds for all environment scenarios, $\env \in \envs$ and iterations $n \geq 1$ jointly with probability at least $1- \delta$,
\begin{equation}
    \tree(l_1(\env), \dots, l_q(\env)) \leq \varphi(\env) \leq \tree(u_1(\env), \dots, u_q(\env))
\end{equation}
\label{lem:confidence_on_combined_fun}
\end{lemma}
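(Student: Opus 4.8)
The plan is to reduce the statement to a monotonicity argument over the parse tree, so that essentially all of the probabilistic content is already carried by \cref{lem:mu_confidence}. First I would condition on the high-probability event furnished by that lemma: jointly with probability at least $1-\delta$, the pointwise confidence intervals ${l_i(\env) \le \mu_i(\env) \le u_i(\env)}$ hold simultaneously for every environment ${\env \in \envs}$, every predicate index ${i \in \mathcal{I}}$, and every iteration ${n \ge 1}$. On this event all subsequent reasoning is deterministic, and since the desired inequality is quantified over exactly the same $\env$ and $n$, no further union bound is required.

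The key structural fact is that the parse-tree function $\tree(\cdot, \dots, \cdot)$ is monotonically non-decreasing in each of its $q$ arguments. I would establish this by induction on the depth of $\tree$. Every internal node is either $\min$ or $\max$, and both ${(a,b) \mapsto \min(a,b)}$ and ${(a,b) \mapsto \max(a,b)}$ are non-decreasing in each argument; since a composition of coordinatewise non-decreasing maps is again coordinatewise non-decreasing, the property propagates from the leaves up to the root.

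Given monotonicity, the lemma follows immediately. Substituting the componentwise bounds ${l_i(\env) \le \mu_i(\env) \le u_i(\env)}$ into the non-decreasing map $\tree$ yields
\[
\tree(l_1(\env), \dots, l_q(\env)) \le \tree(\mu_1(\env), \dots, \mu_q(\env)) = \varphi(\env) \le \tree(u_1(\env), \dots, u_q(\env)),
\]
where the middle equality is just the definition ${\varphi(\env) = \tree(\mu_1(\env), \dots, \mu_q(\env))}$ obtained from the quantitative semantics \cref{eq:convert_spec_to_function}.

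The one subtlety worth handling carefully is negation. After applying the rewrite rules \cref{eq:rewrite_rules} and De Morgan's law, negations appear only at the leaves as terms of the form $-\mu_i$, and ${x \mapsto -x}$ is order-reversing rather than order-preserving, which would break the monotonicity chain if it sat inside the tree. I would resolve this by folding each leaf sign into the definition of the (signed) predicate modeled by its own GP, so that $\tree$ contains only $\min$ and $\max$ nodes and its leaves coincide exactly with the quantities whose confidence intervals are supplied by \cref{lem:mu_confidence}. Under this convention the monotonicity argument applies verbatim; the only thing left to verify is the bookkeeping that a negated leaf's pessimistic bound $l_i$ is correctly identified with the lower end of the confidence interval on $-\mu_i$, equivalently the negative of the upper end of the interval on $\mu_i$. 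This is the main—though entirely routine—obstacle, and once it is dispatched the remainder of the argument is immediate.
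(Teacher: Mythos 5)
Your proposal is correct and is essentially the paper's argument: the paper's proof is a one-line appeal to \cref{lem:mu_confidence} together with ``the properties of the $\min$ and $\max$ operators,'' which is precisely the coordinatewise monotonicity of $\tree$ that you spell out, applied on the joint high-probability event. Your explicit treatment of the negated leaves (folding the sign into the leaf predicate so only $\min$/$\max$ nodes remain) is a worthwhile clarification of a detail the paper leaves implicit, but it does not change the route.
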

\begin{proof}
This is a direct consequence of~\cref{lem:mu_confidence} and the properties of the~$\min$ and~$\max$ operators.
\end{proof}

We are now able to prove the main theorem as a direct consequence of~\cref{lem:confidence_on_combined_fun}.

\maintheorem*
\begin{proof}
For independent variables the mutual information decomposes additively and following~\cref{note:sum_of_bounded_rkhs} this is a direct consequence of~\cref{lem:confidence_on_combined_fun}, since~${ \tree(l_1(\env), \dots, l_q(\env)) \leq \varphi(\env) }$ holds for all ${ \env \in \envs }$ with probability at least~$1 - \delta$.
\end{proof}

\subsection{Convergence proof}

In the following, we prove a stronger result about convergence of our algorithm.

The key quantity in the behavior of the algorithm is the mutual information~\cref{eq:mutual_information}. Importantly, it was shown in~\cite{Berkenkamp2016SafeOpt} that it can be upper bounded by the worst-case mutual information, the information capacity, which in turn was shown to be sublinear by~\cite{srinivas2009gaussian}. In particular, let~$\mathbf{f}_\mathbb{W}$ denote the noisy measurements obtained when evaluating the function~$f$ at points in~$\mathbb{W}$. The mutual information obtained by the algorithm can be bounded according to
\begin{equation}
    \begin{split}
        I(\mathbf{f}_{\envs_n \times \mathcal{I}}; f)
        & \leq \max_{\bar{\envs} \subset \envs, |\bar{\envs}| \leq n}I(\mathbf{f}_{\bar{\envs}\times \mathcal{I}}; f);\\
        & \leq \max_{\mathcal{D} \subset \envs \times \mathcal{I}, |\mathcal{D}| \leq n \cdot q}  I (\mathbf{f}_{\mathcal{D}}; f); \\
        &=\gamma_{q\cdot n} ,
    \end{split}
    \label{Eqn:MI_bound}
\end{equation}
where~$\gamma_n$ is the worst-case mutual information that we can obtain from~$n$ measurements,
\begin{equation}
    \gamma_n = \max_{\mathcal{D} \subset \envs \times \mathcal{I}, |\mathcal{D}| = n} I(\mathbf{f}_{\mathcal{D}}; f).
\end{equation}
This quantity was shown to be sublinear in~$n$ for many commonly-used kernels in~\cite{srinivas2009gaussian}.

A key quantity to show convergence of the algorithm is the instantaneous regret,
\begin{equation}
    r_n = \min_{\env \in \envs} \varphi(\env) - \varphi(\env_n),
\end{equation}
the difference between the unknown true minimizer of~$\varphi$ and the environment parameters~$\env_n$ that~\cref{Algo:active_testing} selects at iteration~$n$. If the instantaneous regret is equal to zero, the algorithm has converged.

In the following, we will show that the cumulative regret, ${R_n = \sum_{i=1}^{n} r_i}$ is sublinear in~$n$, which implies convergence of~\cref{Algo:active_testing}.

We start by bounding the regret in terms of the confidence intervals on~$\mu_i$.
\begin{lemma}
Fix $n \geq 1$, if $|f(\env, i) - m_{q \cdot (n-1)}(\env, i)|\leq \beta^{1/2}_{q\cdot n}\sigma_{q \cdot(n-1)}(\env, i)$ for all $\env, i \in \envs \times \mathcal{I}$, then the regret is bounded by $r_n \leq 2 \beta^{1/2}_{q\cdot n} \max_i \sigma_{q \cdot (n-1)}(\env, i)$.
\label{lem:2}
\end{lemma}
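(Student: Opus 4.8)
The plan is to mirror the classical GP-LCB regret bound of~\cite{srinivas2009gaussian}, but with the single-GP confidence interval replaced by the tree-propagated interval from~\cref{lem:confidence_on_combined_fun}. Write $\ell(\env) = \tree(l_1(\env), \dots, l_q(\env))$ and $u(\env) = \tree(u_1(\env), \dots, u_q(\env))$ for the lower and upper confidence bounds on~$\varphi$, and let $\env^\star = \argmin_{\env \in \envs} \varphi(\env)$ be the true minimizer. The hypothesis of the lemma is exactly the per-predicate confidence condition, so~\cref{lem:confidence_on_combined_fun} applies and gives the sandwich $\ell(\env) \leq \varphi(\env) \leq u(\env)$ for every $\env \in \envs$.

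First I would exploit the selection rule~\cref{Eqn:BO_multi_GP}, which picks $\env_n = \argmin_{\env} \ell(\env)$. This immediately yields $\ell(\env_n) \leq \ell(\env^\star) \leq \varphi(\env^\star)$, where the last step uses the lower bound at $\env^\star$. Combining this with the upper bound $\varphi(\env_n) \leq u(\env_n)$ and the definition of the (non-negative) instantaneous regret $r_n = \varphi(\env_n) - \min_{\env \in \envs}\varphi(\env)$, I obtain $r_n \leq u(\env_n) - \ell(\env_n)$; that is, the instantaneous regret is dominated by the width of the tree-propagated confidence interval evaluated at the queried point.

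The crux of the argument, and the step I expect to be the main obstacle, is bounding this width $u(\env_n) - \ell(\env_n) = \tree(u_1(\env_n), \dots, u_q(\env_n)) - \tree(l_1(\env_n), \dots, l_q(\env_n))$ by the largest per-predicate width. The key structural fact is that $\tree$, being a composition of $\min$ and $\max$ operators, is $1$-Lipschitz with respect to the $\ell_\infty$ norm of its leaf arguments: since $|\min(a_1,a_2) - \min(b_1,b_2)| \leq \max_j |a_j - b_j|$ and the analogous inequality holds for $\max$, an induction on the depth of $\tree$ establishes $|\tree(\mathbf{a}) - \tree(\mathbf{b})| \leq \max_i |a_i - b_i|$ for any two leaf vectors. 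Applying this with $a_i = u_i(\env_n)$ and $b_i = l_i(\env_n)$ gives $u(\env_n) - \ell(\env_n) \leq \max_i \big( u_i(\env_n) - l_i(\env_n) \big)$. Finally, by the definitions of $l_i$ and $u_i$ each individual width is $u_i(\env_n) - l_i(\env_n) = 2\beta^{1/2}_{q\cdot n}\,\sigma_{q\cdot(n-1)}(\env_n, i)$, so I conclude $r_n \leq 2\beta^{1/2}_{q\cdot n}\max_i \sigma_{q\cdot(n-1)}(\env_n, i)$, as claimed.
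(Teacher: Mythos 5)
Your proposal is correct and follows essentially the same route as the paper: the standard GP-LCB chaining of the selection rule with the confidence sandwich from \cref{lem:confidence_on_combined_fun}, followed by the observation that a tree of $\min$/$\max$ nodes cannot widen the per-leaf confidence intervals. Your $\ell_\infty$-Lipschitz formulation of that last step is just a cleaner packaging of the interval-containment fact ($a_1 - b_1 \leq \min(a_1+c_1, a_2+c_2) \leq a_1 + b_1$) that the paper uses, and you also make explicit the first half of the argument that the paper leaves as ``analogous to Lemma 5.2 of Srinivas et al.''
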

\begin{proof}
The proof is analogous to~\cite[Lemma 5.2]{srinivas2009gaussian}. The maximum standard deviation follows from the properties of the $\max$ and $\min$ operators in the parse tree~$\tree$. In particular, let~$a_1, b_1, a_2, b_2 \in \mathbb{R}$ with~$a_1 - b_1 < a_2 - b_2$. Then for all~$c_1 \in [-b_1, b_1]$ and~$c_2 \in [-b_2, b_2]$ we have that
\begin{equation}
    a_1 - b_1 \leq \min(a_1 + c_1, a_2 + c_2) \leq a_1 + b_1.
\end{equation}
The~$\max$ operator is analogous. Thus, since the parse tree~$\tree$ is composed only of min and max nodes, the regret is bounded by the maximum error over all predicates. The result follows.
\end{proof}

\begin{lemma}
Pick $\delta \in (0,1)$ and $\beta_{q \cdot n}$ as shown in~\cref{lem:mu_confidence}, then the following holds with probability at least $1-\delta$,
\begin{equation}
    \sum_{i=1}^n r_n^2 \leq \beta_{q \cdot n} C_1 q \mathbf{I}(\mathbf{f}_{\envs_n \times \mathcal{I}}; f) \leq \beta_{q \cdot n} C_1 \gamma_{q\cdot n}
\end{equation}
where $r_n$ is the regret between the true minimizing environment scenario, $\env^*$ and the current sample, $\env_n$; and $C_1 = 8 / \log{1+\sigma^{-2}}$
\label{lem:3}
\end{lemma}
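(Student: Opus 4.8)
The plan is to mirror the cumulative-regret argument of \cite[Lemma~5.3 and~5.4]{srinivas2009gaussian}, carried out over the extended input space $\envs \times \mathcal{I}$ so that the $q$ measurements collected per iteration are accounted for, and then to invoke the information-capacity bound \cref{Eqn:MI_bound}. Throughout I would work on the high-probability event of \cref{lem:mu_confidence}, so that the hypothesis of \cref{lem:2} holds simultaneously at every iteration; all statements then hold jointly with probability at least $1-\delta$.

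First I would start from \cref{lem:2}, which gives $r_n \le 2\beta_{q\cdot n}^{1/2}\max_i \sigma_{q\cdot(n-1)}(\env_n,i)$. Squaring, and using that $\beta_{q\cdot n}$ is non-decreasing in $n$, bounds $r_n^2 \le 4\beta_{q\cdot n}\max_i \sigma^2_{q\cdot(n-1)}(\env_n,i)$ with a single scaling factor that can be pulled out of the whole sum. Since the posterior variances are non-negative, I would bound the maximum over predicates by the sum over all $q$ predicates, $\max_i \sigma^2_{q\cdot(n-1)}(\env_n,i)\le \sum_{i=1}^q \sigma^2_{q\cdot(n-1)}(\env_n,i)$; this is the step at which the $q$ per-iteration measurements enter, and it reduces the problem to controlling $\sum_{n}\sum_{i=1}^q \sigma^2_{q\cdot(n-1)}(\env_n,i)$.

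I would then convert this sum of variances into mutual information. Using the normalized-kernel bound $\sigma^2_{\cdot}(\cdot)\le k(\cdot,\cdot)\le 1$ together with the elementary fact that $s \le \tfrac{\sigma^{-2}}{\log(1+\sigma^{-2})}\log(1+s)$ for $s\in[0,\sigma^{-2}]$ (since $s/\log(1+s)$ is monotone), each term is bounded as $\sigma^2_{q\cdot(n-1)}(\env_n,i)\le \tfrac{1}{\log(1+\sigma^{-2})}\log\!\big(1+\sigma^{-2}\sigma^2_{q\cdot(n-1)}(\env_n,i)\big)$. Summing these log-terms over all $q\cdot n$ measurements reproduces exactly the telescoped information-gain expression \cref{eq:mutual_information}, i.e.\ twice $I(\mathbf{f}_{\envs_n\times\mathcal{I}};f)$, so the numerical constant collects to $C_1 = 8/\log(1+\sigma^{-2})$ (with the $q$ tracking the per-iteration measurements). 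The second inequality then follows immediately from the information-capacity bound \cref{Eqn:MI_bound}, $I(\mathbf{f}_{\envs_n\times\mathcal{I}};f)\le \gamma_{q\cdot n}$, and the whole chain transfers from the single-GP proofs via the reformulation of \cref{note:sum_of_bounded_rkhs}.

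The main obstacle is the block structure of the measurements: \cref{lem:2} uses the start-of-iteration variances $\sigma_{q\cdot(n-1)}(\cdot,i)$ for all $q$ predicates at once, whereas the telescoping identity for $I$ is naturally expressed through the \emph{sequential} posterior variances $\sigma_{t-1}$ at the individual steps $t=q\cdot(n-1)+1,\dots,q\cdot n$; since variance only decreases as data is added, the start-of-iteration values dominate the sequential ones and cannot be substituted naively. I would close this gap by appealing to the independent-predicate kernel \cref{eq:independent_kernel}: a measurement of predicate $i'\neq i$ leaves the posterior variance of predicate $i$ unchanged, so within iteration $n$ the variance of predicate $i$ immediately before it is measured coincides with its start-of-iteration value $\sigma_{q\cdot(n-1)}(\env_n,i)$. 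The block and sequential variances therefore agree termwise, the standard telescoping applies verbatim over the $q\cdot n$ extended-space measurements, and the result follows from \cite{Berkenkamp2016SafeOpt,ChowdhuryG17}. (For correlated predicates one would instead need a batch-style correction for the stale variances.)
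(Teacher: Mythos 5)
Your proof is correct and takes essentially the same route as the paper, whose own proof is only a two-line deferral to \cite[Lemma~5.4]{srinivas2009gaussian} and \cite{Berkenkamp2016SafeOpt}: you have simply written out the steps those citations supply (square the bound from \cref{lem:2}, pass from the max to the sum over predicates, convert the summed variances to mutual information via $s\le\tfrac{\sigma^{-2}}{\log(1+\sigma^{-2})}\log(1+s)$, and finish with \cref{Eqn:MI_bound}). Your treatment of the block-versus-sequential posterior variances via the independent kernel \cref{eq:independent_kernel} addresses a subtlety the paper leaves implicit in the citation to \cite{Berkenkamp2016SafeOpt}, and your chain in fact yields the slightly tighter bound $C_1\,\beta_{q\cdot n}\, I(\mathbf{f}_{\envs_n\times\mathcal{I}};f)$ without the extra factor $q$, which implies the stated inequality.
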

\begin{proof}
The first inequality follows similar to~\cite[Lemma 5.4]{srinivas2009gaussian} and the proofs in~\cite{Berkenkamp2016SafeOpt}. In particular, as in \cite{Berkenkamp2016SafeOpt},
\begin{equation*}
    r_n^2  \leq 4 \beta_{q \cdot n}^2 \max_{i \in \mathcal{I}} \sigma^2_{q \cdot (n-1)}(\env_n, i)
\end{equation*}
The second inequality follows from~\cref{Eqn:MI_bound}.
\end{proof}

\begin{lemma}
Under the assumptions of~\cref{lem:confidence_on_combined_fun}, let $\delta \in (0,1)$ and choose $\env_n$ according to~\cref{Eqn:BO_multi_GP}. Then, the cumulative regret $R_N$ over $N$ iterations of~\cref{Algo:active_testing} is bounded with high probability,
\begin{equation}
    \text{Pr} \left\{ R_n \leq \sqrt{C_1 \beta_N N \gamma_{q \cdot N}} \quad \forall N \geq 1 \right\} \geq 1 - \delta
\end{equation}
where $C_1 = \frac{8}{\log{1+\sigma^{-2}}}$.
\label{lem:4}
\end{lemma}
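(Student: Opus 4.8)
The plan is to mirror the classical GP-LCB cumulative-regret argument of Srinivas et al.\ (Lemma 5.4 and Theorem 1 in~\cite{srinivas2009gaussian}), now in the form already adapted to our composite function~$f$ by the preceding lemmas. First I would condition on the high-probability event under which the confidence intervals of~\cref{lem:mu_confidence} hold jointly for all~$\env \in \envs$, all predicate indices~$i \in \mathcal{I}$, and all iterations~$n \geq 1$ simultaneously. By construction this event has probability at least~$1-\delta$, and crucially it is the \emph{same} event that is required by both~\cref{lem:2} and~\cref{lem:3}. Working on this event makes the remainder of the argument deterministic, so the final probability statement follows because nothing below introduces any additional randomness.

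Second, I would relate the cumulative regret to the sum of squared instantaneous regrets via Cauchy--Schwarz. Since $R_N = \sum_{i=1}^N r_i$ with each $r_i \geq 0$, we have
\begin{equation}
R_N^2 = \Big( \sum_{i=1}^N r_i \Big)^2 \leq N \sum_{i=1}^N r_i^2 .
\end{equation}
This reduces the task to bounding $\sum_{i=1}^N r_i^2$, which is exactly what~\cref{lem:3} supplies: on the conditioning event, $\sum_{i=1}^N r_i^2 \leq C_1\, \beta_{q \cdot N}\, \gamma_{q \cdot N}$, where the information-capacity bound~\cref{Eqn:MI_bound} has already been used to replace the realized mutual information by its worst-case surrogate~$\gamma_{q \cdot N}$. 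Substituting and taking square roots yields $R_N \leq \sqrt{C_1\, \beta_N\, N\, \gamma_{q \cdot N}}$, matching the claimed bound.

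The only points requiring care — rather than a genuine obstacle — are bookkeeping ones. One must use monotonicity of the confidence scaling~$\beta_{q \cdot n}$ in~$n$, so that the per-iteration factors $\beta_{q \cdot i}$ for $i \leq N$ can all be upper bounded by the single value pulled out of the sum in~\cref{lem:3}; one must also track the factor-of-$q$ inflation of the index coming from collecting~$q$ predicate measurements per iteration, which is why the information term is evaluated at~$q \cdot N$ rather than~$N$. Finally, the quantifier ``$\forall N \geq 1$'' inside the probability is legitimate precisely because the underlying confidence event of~\cref{lem:mu_confidence} is itself uniform over all iterations, so no union bound over~$N$ is needed. Since all of these ingredients are provided by~\cref{lem:mu_confidence,lem:2,lem:3}, I expect the result to follow essentially immediately, with no substantive difficulty remaining.
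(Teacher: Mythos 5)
Your proposal is correct and follows essentially the same route as the paper: condition on the confidence event of~\cref{lem:mu_confidence}, apply Cauchy--Schwarz to get $R_N^2 \leq N \sum_{i=1}^N r_i^2$, and invoke~\cref{lem:3} to bound the sum of squared regrets by $C_1 \beta_{q\cdot N}\gamma_{q\cdot N}$. The additional bookkeeping remarks (monotonicity of $\beta$, uniformity of the confidence event over $N$) are sensible and consistent with the paper's argument.
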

\begin{proof}
Since, $R_N = \sum_{i=1}^{N} r_i$, from Cauchy-Schwartz inequality we have, $R_N^2 \leq N \sum_{i=1}^N r_i^2$. The rest follows from~\cref{lem:3}.
\end{proof}

We introduce some notation, let
\begin{equation}
    \hat{\env}_n  =\text{argmin}_{\env \in \{\env_1, \dots, \env_n\}} \varphi(\env)
    \label{Eqn:min_env}
\end{equation}
be the minimizing environment scenario sampled by BO in $n$ iterations and let
\begin{equation}
  \env^* = \argmin_{\env \in \envs} \varphi(\env)
\end{equation}
be the unknown, optimal parameter.

\begin{corollary}
For any $\delta \in (0,1)$ and $\epsilon \in \reals^+$, there exits a $n^*$,
\begin{equation}
    \frac{n^*}{\beta_{q \cdot n^*} \gamma_{q \cdot n^*}} = \frac{C_1}{\epsilon^2}
\end{equation}
such that $\forall n \geq n^*$, $\varphi(\env^*) - \varphi(\hat{\env}_n) \leq \epsilon$ holds with probability at least $1 - \delta$.
\label{cor:1}
\end{corollary}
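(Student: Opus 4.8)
The plan is to derive the corollary directly from the cumulative-regret bound of \cref{lem:4}, using the classical observation that the suboptimality of the \emph{best} sampled point (the simple regret) is always dominated by the \emph{average} instantaneous regret. First I would note that, since $\hat{\env}_n$ minimizes $\varphi$ over the sampled set $\{\env_1, \dots, \env_n\}$, the suboptimality gap satisfies
\[
\varphi(\hat{\env}_n) - \varphi(\env^*) = \min_{i \le n} \big(\varphi(\env_i) - \varphi(\env^*)\big) \le \frac{1}{n}\sum_{i=1}^{n} \big(\varphi(\env_i) - \varphi(\env^*)\big) = \frac{R_n}{n},
\]
because the minimum of a finite set of nonnegative instantaneous regrets is at most their mean. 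This reduces the problem to controlling $R_n / n$, and it is exactly the step that converts the cumulative guarantee into a per-iteration accuracy statement.

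Next I would invoke \cref{lem:4}, which asserts that, jointly for all $N \ge 1$ and with probability at least $1-\delta$, one has $R_N \le \sqrt{C_1 \beta_N N \gamma_{q\cdot N}}$. Working on this single high-probability event and setting $N = n$, combining with the previous display yields
\[
\varphi(\hat{\env}_n) - \varphi(\env^*) \le \frac{R_n}{n} \le \sqrt{\frac{C_1\, \beta_{q\cdot n}\, \gamma_{q\cdot n}}{n}}.
\]
Demanding that the right-hand side be at most $\epsilon$ is algebraically equivalent to the condition $\frac{n}{\beta_{q\cdot n}\gamma_{q\cdot n}} \ge \frac{C_1}{\epsilon^2}$, which is precisely what the threshold $n^*$ in the statement is designed to encode via the equality $\frac{n^*}{\beta_{q\cdot n^*}\gamma_{q\cdot n^*}} = \frac{C_1}{\epsilon^2}$.

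The remaining, and most delicate, step is to guarantee that once this threshold is crossed it \emph{persists}, so that the $\epsilon$-accuracy holds for \emph{every} $n \ge n^*$ rather than at an isolated value. Here I would appeal to the regularity of the information-theoretic quantities: by \cref{Eqn:MI_bound} together with~\cite{srinivas2009gaussian}, $\gamma_{q\cdot n}$ is sublinear in $n$, while $\beta_{q\cdot n}$ grows only poly-logarithmically, so the product $\beta_{q\cdot n}\gamma_{q\cdot n}$ is sublinear and the ratio $\frac{n}{\beta_{q\cdot n}\gamma_{q\cdot n}}$ diverges to infinity. I expect the main obstacle to be that divergence by itself does not yield monotonicity of the ratio, so the clean ``for all $n \ge n^*$'' quantifier requires a small extra argument: I would take $n^*$ to be the smallest integer past which the sublinear ratio remains above $C_1/\epsilon^2$ (using eventual monotonicity of the sublinear growth), which secures the universal claim. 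Since the entire chain of inequalities is carried out on the event furnished by \cref{lem:4}, the conclusion $\varphi(\hat{\env}_n) - \varphi(\env^*) \le \epsilon$ for all $n \ge n^*$ holds with probability at least $1-\delta$, as required.
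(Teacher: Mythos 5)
Your proof is correct and follows essentially the same route as the paper: bound the simple regret of the best sampled point by the average cumulative regret, $\varphi(\hat{\env}_n) - \varphi(\env^*) \le R_n/n$, invoke \cref{lem:4} to get $R_n/n \le \sqrt{C_1\beta_{q\cdot n}\gamma_{q\cdot n}/n}$, and solve for the threshold $n^*$. (Your version of the first step, min of nonnegative gaps $\le$ mean, is in fact stated more cleanly than the paper's, which has some sign slips in the corresponding display.)

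The one place you diverge is the step you yourself flag as delicate: making the guarantee persist for all $n \ge n^*$. You secure it by arguing that the ratio $n/(\beta_{q\cdot n}\gamma_{q\cdot n})$ diverges and is eventually monotone, and by redefining $n^*$ as the last crossing point. That works, but it leans on a property (eventual monotonicity of a sublinear ratio) that is not actually needed. The paper instead uses the trivial observation that the running minimum is non-increasing: $\varphi(\hat{\env}_n) \le \varphi(\hat{\env}_{n^*})$ for $n \ge n^*$, so once the gap is at most $\epsilon$ at $n^*$ it stays at most $\epsilon$ forever, regardless of how the ratio behaves afterwards. This is both simpler and strictly more robust; you may want to replace your persistence argument with it.
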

\begin{proof}
The cumulative reward over $n$ iterations, $R_n = \sum_{i=1}^{n} \varphi(\env^*) - \varphi(\env_i)$ where $\env_i$ is the $i$-th BO sample.
Defining $\hat{\env}_n$ as in~\cref{Eqn:min_env} we have,
\begin{equation}
    \begin{split}
        R_n &= \sum_{i=1}^{n} \varphi(\env^*) - \varphi(\env_i)  \\
        & \geq \sum_{i=1}^n \varphi(\env^*) - \varphi(\hat{\env}_n)  \\
        & = n (\varphi(\hat{\env}_n) - \varphi(\env^*))
    \end{split}
\end{equation}
Combining this result with~\cref{lem:4}, we have with probability greater than $1-\delta$ that
\begin{equation}
    \begin{split}
        \varphi(\env^*) - \varphi(\hat{\env}_n)
        &\leq \frac{R_n}{n} \\
        & \leq \sqrt{\frac{C_1 \beta_{q\cdot n} \gamma_{q \cdot n}}{n}}
    \end{split}
\end{equation}
To find, $n^*$, we bound the RHS by $\epsilon$,
\begin{equation}
    \sqrt{\frac{C_1 \beta_{q \cdot n^*} \gamma_{q \cdot n^*}}{n^*}} \leq \epsilon \Rightarrow \frac{n^*}{\beta_{q \cdot n^*} \gamma_{q \cdot n^*}} \geq \frac{C_1}{\epsilon^2}
\end{equation}
For $n > n^*$, the minimum $\varphi(\hat{\env}_n) \leq \varphi(\hat{\env}_{n^*}) \implies \varphi(\hat{\env}_n)- \varphi(\env^*) \leq \epsilon$.
\end{proof}

We are now ready to prove our main convergence theorem.

\begin{theorem}
Under the assumptions of~\cref{lem:confidence_on_combined_fun}, choose $\delta \in (0,1)$, $\epsilon \in \reals^+$ and define $n^*$ using~\cref{cor:1}. If $n \geq n^*$ and $\varphi(\hat{\env}_n) > \epsilon$, then, with probability greater than $1-\delta$, the following statements hold jointly
\begin{itemize}
    \item $\varphi(\env^*) > 0$
    \item The closed loop satisfies $\varphi$, i.e., the control can safely control the system in all environment scenarios, $\envs$
    \item The system has been verified against all environments, $\envs$
\end{itemize}
\end{theorem}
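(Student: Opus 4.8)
The plan is to obtain all three conclusions as immediate consequences of the accuracy guarantee already established in \cref{cor:1}, so that essentially no new machinery is required beyond one arithmetic rearrangement and an appeal to the quantitative semantics of \cref{eq:convert_spec_to_function}. Recall that \cref{cor:1} certifies that for every $n \geq n^*$, with probability at least $1-\delta$, the best sample found so far is $\epsilon$-accurate, i.e.\ $\varphi(\hat{\env}_n) - \varphi(\env^*) \leq \epsilon$, where $\varphi(\env^*) = \min_{\env \in \envs} \varphi(\env)$ is the true global minimum. This high-probability event is exactly the event on which the confidence intervals of \cref{lem:mu_confidence,lem:confidence_on_combined_fun} hold, and it is the only source of randomness in the argument; I would therefore condition on it once and derive all three bullets deterministically, which is what makes them hold \emph{jointly}.

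First I would rearrange the accuracy bound to $\varphi(\env^*) \geq \varphi(\hat{\env}_n) - \epsilon$ and substitute the theorem's hypothesis $\varphi(\hat{\env}_n) > \epsilon$, giving $\varphi(\env^*) > \epsilon - \epsilon = 0$. This establishes the first bullet, $\varphi(\env^*) > 0$. The remaining two bullets are restatements of the same fact under the interpretation built in \cref{Sec:Req}: since $\varphi(\env^*) = \min_{\env \in \envs} \varphi(\env) > 0$, we have $\varphi(\env) > 0$ for every $\env \in \envs$, and by the pessimistic convention that $\varphi(\env) > 0$ denotes satisfaction of the specification, the closed loop meets $\varphi$ under all sources of uncertainty in $\envs$, i.e.\ the system is verified against every environment scenario.

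The hard part is not the algebra but the probabilistic bookkeeping, together with getting the sign right: the single confidence event must carry the informative direction $\varphi(\hat{\env}_n) - \varphi(\env^*) \leq \epsilon$ rather than its vacuous reverse, since $\env^*$ is a minimizer and only the excess of the best observed value above the optimum is meaningful. I would also emphasize that the hypotheses $n \geq n^*$ and $\varphi(\hat{\env}_n) > \epsilon$ are deterministic, computable quantities given the drawn samples, so the probability $1-\delta$ refers solely to the validity of the confidence intervals inherited through \cref{cor:1}; conditioning on that one event and then applying the deterministic implication is precisely what yields the three statements simultaneously with the stated confidence.
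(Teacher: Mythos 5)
Your proof is correct and follows essentially the same route as the paper's: invoke \cref{cor:1} to obtain the $\epsilon$-accuracy of $\hat{\env}_n$, rearrange to $\varphi(\env^*) \geq \varphi(\hat{\env}_n) - \epsilon > 0$, and read off the remaining two bullets from the quantitative semantics of \cref{eq:convert_spec_to_function}. Your insistence on carrying the informative direction $\varphi(\hat{\env}_n) - \varphi(\env^*) \leq \epsilon$ is well taken --- the paper's statement of \cref{cor:1} writes the difference with the signs reversed, which is vacuously true since $\env^*$ is the minimizer, so your reading is the corrected one that the argument actually requires.
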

\begin{proof}
This holds from~\cref{lem:4} and~\cref{cor:1}. From~\cref{cor:1}, we have $\forall n \geq n^*$, $\text{Pr}(\varphi(\env^*) - \varphi(\hat{\env}_n) \leq \epsilon) > 1-\delta$. If $\exists n \geq n^*$, such that $\varphi(\hat{\env}_n) > \epsilon$, then we have $\text{Pr}(\varphi(\env^*) > 0) 1-\delta$, i.e., the minimum value $\varphi$ can achieve on the closed loop system is greater than $0$. $\varphi$ is hence, satisfied by our system in all $\env \in \envs$.
\end{proof}

\end{document}